\documentclass [11pt,a4paper]{article}

\usepackage{amssymb}
\usepackage{enumerate,verbatim}
\usepackage{amsmath,amssymb}
\usepackage{amsfonts,amsthm}
\usepackage{wasysym}
\usepackage{statmath}
\usepackage{bm}

\usepackage{epsfig,lpic,tikz}
\usepackage{statmath}
\usepackage{graphicx}
\usepackage{subfigure}
\usepackage[utf8]{inputenc}

\usepackage{pdfsync}
\usepackage{xcolor,hyperref}
\usepackage[T1]{fontenc}
\usepackage{mathtools}

\hypersetup{
    colorlinks=true,
    linkcolor=blue,
    filecolor=magenta,
    urlcolor=blue,%cyan,
    pdftitle={Overleaf Example},
    pdfpagemode=FullScreen,
    }

\newcounter{teoremaganso}
\newcounter{appendix}

\newcounter{coryganso}

\renewenvironment{abstract}{\small\quotation\noindent
 {\bfseries \abstractname .}}{\endquotation \par}

\catcode`\@=11

\def\resetthefootnote{\renewcommand{\thefootnote}{\@arabic\c@footnote} }
\def\@principiremex#1{\trivlist
 \item[\hskip \labelsep{\bfseries #1\ \thetheo.}]\ignorespaces}
\def\opar@principiremex#1[#2]{\trivlist
 \item[\hskip \labelsep{\bfseries #1\ \thetheo\ (#2).}]\ignorespaces}

\newcommand{\newTHEOremrom}[2]{\newenvironment{#1}{\refstepcounter{theo}\@ifnextchar[{\opar@principiremex{#2}}
{\@principiremex{#2}}}{\qedB\endtrivlist}} \catcode`\@=12
\DeclareMathSymbol{\square}{\mathord}{AMSa}{"03}
\newcommand{\qedB}{\nopagebreak\hspace*{\fill}$\square$\par}
\newcommand{\Qed}{\nopagebreak\hspace*{\fill}{\vrule width6pt height6pt depth0pt}\par}

\newtheorem*{thmA}{Theorem A}
\newtheorem*{thmB}{Theorem B}

\theoremstyle{plain}

\newtheorem{theorem}{Theorem}[section]
\newtheorem{lemma}[theorem]{Lemma}

\newtheorem{definition}{Definition}
\newtheorem{remark}{Remark}

\newsavebox{\savepar}

\title{\bf A Convex-Geometric Framework for Fully Phase-Locked States in the Finite Kuramoto Model}

\author{Antonio Garijo$^1$, Sergio G\'omez$^{1,2}$ and Alex Arenas$^{1,2,3}$\thanks{Corresponding author: \texttt{alexandre.arenas@urv.cat}}
\\*[.1truecm]
{\small \textsl{$^{1}$ Departament d'Enginyeria Inform{\`a}tica i Matem{\`a}tiques,}}
\\*[-.05truecm]
{\small \textsl{Universitat Rovira i Virgili, 43007 Tarragona, Spain}}
\\*[.1truecm]
{\small \textsl{$^{2}$ ComSCIAM, Universitat Rovira i Virgili, 43007 Tarragona, Spain}}
\\*[.1truecm]
{\small \textsl{$^{3}$ Complexity Hub Vienna, Metternichgasse 8, 1030 Wien, Austria}}
}

\date{}
\begin{document}

\maketitle
\begin{abstract}
We study the finite-size Kuramoto model of all-to-all coupled phase oscillators with heterogeneous natural frequencies and characterize the minimal coupling strength required for the existence of a fully phase-locked equilibrium (in a co-rotating frame). To remove the degeneracy due to uniform phase shifts, we move to a reduced co-rotating frame and assess stability through the Jacobian of the reduced system: a fully phase-locked state is stable when this Jacobian is negative definite. This defines a stability region in the phase space. The Kuramoto vector field maps this region to a convex set in frequency space, so a fully-locked state at coupling $K$ exists exactly when the rescaled frequency vector $\hat{\bfomega}/K$ lies inside that convex image. The critical coupling $K_{\ell}$ is defined as the smallest coupling strength for which a fully phase-locked equilibrium exists; geometrically, it corresponds to the first intersection of the ray $t\hat{\bfomega}$ with the boundary of this convex set.
Building on this convex-geometric structure, we construct an explicit polytope from analytically computable boundary points of the stability region, providing a closed-form upper bound $K_b \ge K_{\ell}$. The bound is exact for frequencies aligned with polytope vertices and offers a fully explicit outer approximation for general frequency vectors. While not uniformly sharp in a quantitative sense, this construction exposes the underlying geometry of stable fully phase-locking solutions.
These results provide a practical use the convex-geometric structure underlying stable fully-locked states in the Kuramoto model.
\end{abstract}

\section{Introduction}
Synchronization of interacting oscillatory units is a ubiquitous collective phenomenon, appearing across the natural sciences and engineering in contexts as diverse as circadian rhythms and neuronal activity, chemical oscillations, and power-grid dynamics. A common modeling paradigm describes each unit by a phase variable and encodes interactions through coupling terms determined by the underlying network, giving rise to a rich repertoire of behaviors, including phase locking, partial synchrony, and complex macroscopic transitions \cite{Pikovsky_Book, Strogatz_survey}. Within this broad class of phase-oscillator models, the Kuramoto model introduced in 1975 \cite{Kuramoto_Model} has become a central and widely used framework because it captures the competition between intrinsic frequency heterogeneity and attractive coupling while remaining analytically tractable in many regimes. Its mean-field formulation and network generalizations have stimulated extensive theoretical work on synchronization thresholds and bifurcation structure, as well as applications in physics, biology and technology \cite{Acebron_Review, Dorfler_survey}. From a finite-size perspective, a complementary line of work provides explicit synchronization conditions and sharp inequalities that relate frequency heterogeneity to the existence of phase-locked states \cite{CliqueDeVille}. It is important to distinguish between two mathematically distinct problems: (i) the loss of stability of the incoherent state and the associated emergence of macroscopic coherence, typically analyzed via linearization in the thermodynamic limit; and (ii) the existence and stability of fully phase-locked equilibria in finite populations, which constitutes a nonlinear fixed-point problem for the phase differences. The present work addresses the latter.

In the following, we introduce the finite-size Kuramoto dynamics and the order parameter that quantifies phase coherence, and then develop a geometric approach to bound the coupling required for stable phase-locked equilibria.

\section{Model, Reduction, and Stability Notions}

The Kuramoto model of coupled oscillators  is a mathematical framework describing synchronization processes. This model, introduced in  1975 by Kuramoto  (\cite{Kuramoto_Model}), represents a continuous dynamical system governed by a first order differential equation defined in the $(n+1)-$dimensional torus $\mathbb T^{n+1}$.  Using the compact notation  $\bfomega = (\omega_1, \ldots, \omega_n, \omega_{n+1})$, $\bftheta = (\theta_1, \ldots, \theta_n, \theta_{n+1})$,  $\bm{f}=(f_1, \ldots, f_n, f_{n+1})$ and $K \geq 0$ a real parameter, the Kuramoto model writes as
\begin{equation}\label{eq:main}
\frac{d\bftheta}{dt}  = \bfomega + K \bm{f}(\bftheta)
\end{equation}
\noindent where ${\bftheta}$ belongs to the $(n+1)-$torus
$$ \mathbb T^{n+1} = \mathbb S^1 \times  \cdots \times \mathbb S^1 =  \left \{ (\theta_1, \ldots, \theta_n, \theta_{n+1}) \, | \, \theta_i \in (-\pi, \pi],\ 1 \leq i \leq n+1 \right \}
 $$
 and $\bm{f}:\mathbb T^{n+1} \to \mathbb T^{n+1}$ writes as $\bm{f} = (f_1, \ldots, f_n, f_{n+1})$ where
\begin{equation}\label{eq:def_f}
    f_i(\theta_1, \ldots, \theta_n, \theta_{n+1})=\displaystyle  \sum_{j=1, \, j \neq i }^{n+1} \sin{(\theta_j - \theta_i)} \quad \text{for} \quad 1 \leq i \leq n+1.
 \end{equation}

The \emph{synchronization order parameter} $R$ measures the degree of synchronization of the system and is given by
\begin{equation}\label{eq:R}
R e^{ i \psi} = \frac{1}{n+1} \displaystyle \sum_{j=1}^{n+1}  e^{i \theta_j}.
\end{equation}
By definition, $R \in [0,1]$, and values of $R$ close to 1 represent a synchronization state, while values of $R$ close to 0 represent the non-synchronization one. Moreover, $\psi$ represents the average phase of the system.

By the nature of the equation driving system~\eqref{eq:main}, stable solutions are not isolated. More precisely, if $\bftheta^*$ is an equilibrium point and given any angle $\alpha \in (-\pi, \pi]$ then  $\bftheta^* + \bfalpha= (\theta^*_1+\alpha, \ldots, \theta^*_n+\alpha, \theta^*_{n+1} + \alpha)$ it is also an equilibrium point. We can identify all these solutions writing $[\bftheta^*]= \{\bfalpha +  \bftheta^*  \, | \, \alpha \in (-\pi,\pi]\}$. This fact is also visible in the eigenvalues of the Jacobian matrix $D\bm{f}(\bftheta)$ since $  (1, \ldots, 1)$ is an eigenvector of  $D\bm{f}(\bftheta)$ with eigenvalue $\lambda=0$ for all $\bftheta \in \mathbb T^{n+1}$. Thus, $\lambda=0$ is always in the spectrum of $D\bm{f}(\bftheta)$.  There are several ways to avoid this situation. We adopt the following one: reducing to $n$ the number of variables. More precisely, we introduce the new variables
\[
\hat{\theta}_i = \theta_i - \theta_{n+1} \qquad  1 \leq i \leq n.
\]
In this rotation frame, the ODE that governs the Kuramoto model~\eqref{eq:main} is transformed into
\begin{equation}\label{eq:main1}
\frac{d\hat{\bftheta}}{dt}= \hat{\bfomega} + K   \bm{F} ( \hat{\bftheta})
 \end{equation}
where $ \hat{\omega}_i = \omega_i - \omega_{n+1}$  for $1 \leq i \leq n$, and $\bm{F} ( \hat{\bftheta}) = (\hat{f}_1, \ldots, \hat{f}_n)$ is defined by
\begin{equation}\label{eq:f_hat}
\hat{f}_i (\hat{\theta}_1, \ldots, \hat{\theta}_n)  = - \sin(\hat{\theta}_i)
- \displaystyle \sum_{j=1}^{n} \sin(\hat{\theta}_j)
+ \displaystyle \sum_{j=1, \, j \neq i}^n \sin{(\hat{\theta}_j - \hat{\theta}_i)}.
\end{equation}

We also denote by $D\bm{F}$ the Jacobian matrix  given by  $D\bm{F}(\hat{\theta}_1, \ldots, \hat{\theta}_n))=\left(\frac{\partial\hat{f}_i}{\partial \hat{\theta}_k}\right)_{1\leq i,k \leq n}$ where
\begin{equation}\label{eq:derivativef}
\frac{\partial \hat{f}_i}{\partial \hat{\theta}_k} =
\left\{
\begin{array}{ll}
\cos(\hat{\theta}_k-\hat{\theta}_i) - \cos(\hat{\theta}_k),  &  k \neq i,
\\
-2\cos(\hat{\theta}_i) -\displaystyle\sum_{j=1,\, j\neq i} ^n\cos(\hat{\theta}_j-\hat{\theta}_i), \,  & k=i.
\end{array}
\right.
\end{equation}
In particular, at the origin ${\bf0}=(0, \ldots, 0)$ we have that
\begin{equation}\label{eq:DF_0}
D\bm{F}(\bf0)= \left(
\begin{array}{cccccc}
-(n+1) & 0 & 0&\cdots & 0& 0 \\
0 & -(n+1) & 0& \cdots &0 &0 \\
0 & 0 & -(n+1)& \cdots & 0 &0 \\
\vdots & \vdots &  \vdots& \ddots & \vdots & \vdots \\
0 & 0 & 0 & \cdots  & -(n+1) & 0 \\
0 & 0 & 0 & \cdots  & 0 & -(n+1)
\end{array}
\right)
\end{equation}
\noindent is a diagonal matrix. Adopting this framework, we have that the determinant of $D\bm{F}(\bf0)$ is different from zero, thus eliminating the zero eigenvalue in the original formulation. However, the matrix $D\bm{F}(\hat{\bftheta})$ is no longer symmetric, but steal has real eigenvalues, as we see in the next lemma.

Since, from now on, we will just work with the reduced Kuramoto dynamics, it is useful to define a synchronization order parameter $\hat{R}$ based on $\hat{\bftheta}$ instead of $\bftheta$:
\begin{equation}\label{eq:R_red}
\hat{R} e^{i \hat{\psi}} = \frac{1}{n+1} \left[ 1+ \displaystyle \sum_{j=1}^{n}  e^{i \hat{\theta}_j}\right].
\end{equation}

\begin{lemma}\label{lem:DF}
Let $M=\left(
\begin{array}{cccc}
2 & 1 & \cdots &  1 \\
1 & 2 &  \cdots  &1 \\
\vdots &   \vdots& \ddots &  \vdots \\
1 & 1  & \cdots  & 2
\end{array}
\right)$ and $H(\hat{\bftheta})=  \frac{1}{2} \left| 1 + \displaystyle \sum_{j=1}^n e^{i \hat{\theta}_j} \right|^2$.
Then,

\begin{enumerate}[(a)]
    \item $\bm{F} (\hat{\bftheta}) = M  \, \nabla H(\hat{\bftheta}) $.
    \item $D\bm{F}(\hat{\bftheta}) = M  \, D^2H(\hat{\bftheta})$.
    \item $D\bm{F}(\hat{\bftheta})$ has real eigenvalues.
\end{enumerate}

\end{lemma}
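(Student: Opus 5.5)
Setting $\hat\theta_0 := 0$, we write $H$ explicitly: since $\tfrac12\bigl|\sum_{j=0}^n e^{i\hat\theta_j}\bigr|^2 = \tfrac12\sum_{j,k}\cos(\hat\theta_j-\hat\theta_k)$, we get
\[
H(\hat{\bftheta}) = \frac{n+1}{2} + \sum_{j=1}^n \cos\hat\theta_j + \sum_{1\le j<k\le n}\cos(\hat\theta_j-\hat\theta_k).
\]
Differentiating gives
\[
\partial_i H = -\sin\hat\theta_i + \sum_{j\neq i}\sin(\hat\theta_j-\hat\theta_i).
\]
Denote this by $g_i$. For part (a) we compute $(M\nabla H)_i = g_i + \sum_{k=1}^n g_k$. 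The double sum $\sum_k\sum_{j\ne k}\sin(\hat\theta_j-\hat\theta_k)$ vanishes by antisymmetry, so $\sum_k g_k = -\sum_k \sin\hat\theta_k$. Hence
\[
(M\nabla H)_i = -\sin\hat\theta_i - \sum_{j=1}^n\sin\hat\theta_j + \sum_{j\ne i}\sin(\hat\theta_j-\hat\theta_i),
\]
which is exactly $\hat f_i$ in \eqref{eq:f_hat}. Part (b) follows by differentiating (a), since $M$ is constant: $D\bm F = M\,D(\nabla H) = M\,D^2H$. As a consistency check, one can compare with \eqref{eq:derivativef}, for instance at $\bf 0$, where $D^2H(\mathbf 0) = -(I+L)$ with $L$ the graph Laplacian of $K_n$, giving $D^2H(\mathbf 0) = -M$ and $-M^2 \ne -(n+1)I$? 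Checking: $M = I + J$, so $M^2 = I + (n+2)J$. This is not diagonal, yet \eqref{eq:DF_0} is diagonal. So we recompute: $\partial_i\partial_i H = -\cos\hat\theta_i - \sum_{j\ne i}\cos(\hat\theta_j-\hat\theta_i)$, which equals $-n$ at $\mathbf 0$, and $\partial_k\partial_i H = \cos(\hat\theta_k-\hat\theta_i)$, which equals $1$ at $\mathbf 0$. Thus $D^2H(\mathbf 0) = -(n+1)I + J = -(n+1)M^{-1}$ (using $M^{-1} = I - J/(n+1)$), and $M D^2H(\mathbf 0) = -(n+1)I$, consistent with \eqref{eq:DF_0}. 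This check confirms that the sign and entry bookkeeping is right. The general entries then agree with \eqref{eq:derivativef} by the same expansion; we will only sketch this check.

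For part (c), the key observation is that $M$ is symmetric positive definite, with eigenvalues $1$ (multiplicity $n-1$) and $n+1$, so it has a symmetric positive definite square root $M^{1/2}$. Then
\[
D\bm F = M\,D^2H = M^{1/2}\bigl(M^{1/2} D^2H\, M^{1/2}\bigr)M^{-1/2},
\]
so $D\bm F$ is similar to the symmetric matrix $M^{1/2}D^2H\,M^{1/2}$, because $D^2H$ is a symmetric Hessian. Consequently all eigenvalues of $D\bm F$ are real, and $D\bm F$ is in fact diagonalizable. By Sylvester's law of inertia, its eigenvalue signs also match those of $D^2H$; this is useful later for the stability discussion.

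The main obstacle is purely computational: the index bookkeeping in (a), where the cancellation of the antisymmetric double sum is the crux, and confirming the constant term of $H$ does not matter. Part (c) is immediate once (b) is in hand.
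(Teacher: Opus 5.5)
Your proof is correct and follows the paper's argument essentially step for step: compute $\partial_i H$, use the antisymmetric cancellation $\sum_k \partial_k H = -\sum_k \sin\hat\theta_k$ to identify $(I+J)\nabla H$ with $\bm{F}$, differentiate for (b), and conjugate by $M^{1/2}$ for (c); your added remarks on diagonalizability and inertia are also correct. The one thing to clean up is the sanity-check detour. Its intermediate claim $D^2H(\mathbf 0) = -M$ is false, since your Laplacian expression actually equals $-(n+1)I + J$, not $-(I+J)$, and it should be replaced outright by your corrected value $D^2H(\mathbf 0) = -(n+1)M^{-1}$.
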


\begin{proof}
It is easy to check that $H(\hat{\bftheta})=\frac{(n+1)^2}{2} \hat{R}^2(\hat{\bftheta})$. We first see that $\bm{F}(\hat{\bftheta}) = M \cdot \nabla H(\hat{\bftheta})$. Computing the derivatives of $H(\hat{\bftheta})$ we obtain

\begin{eqnarray}
\frac{\partial H}{\partial \theta_k} (\hat{\bftheta}) &= &
 -\sin\theta_k
\left(1+\displaystyle \sum_{j=1}^n \cos\theta_j \right)+  \cos(\theta_k) \left(\displaystyle\sum_{j=1}^n \sin\theta_j\right) \nonumber \\
& = & -\sin(\theta_k) + \displaystyle \sum_{j=1}^n [ \cos(\theta_k) \sin(\theta_j) -
\sin(\theta_k) \cos(\theta_j)] \nonumber \\
& = & -\sin(\theta_k) +  \displaystyle \sum_{j=1}^n \sin(\theta_j - \theta_k).
\end{eqnarray}
Moreover,
\[
\displaystyle \sum_{k=1}^{n} \frac{\partial H}{\partial \theta_k} (\hat{\bftheta}) = -\displaystyle \sum_{k=1}^n \sin(\hat{\theta}_k).
\]
So, from (5) we recall that $F= (\hat{f}_1, \ldots, \hat{f}_n)$, having
\[
\hat{f}_k(\hat{\bftheta}) = -\sin(\theta_k) +  \displaystyle \sum_{j=1}^n \sin(\theta_j - \theta_k) -
\displaystyle \sum_{k=1}^n \sin(\hat{\theta}_k) =
\frac{\partial H}{\partial \theta_k} (\hat{\bftheta}) +
\displaystyle \sum_{k=1}^{n} \frac{\partial H}{\partial \theta_k} (\hat{\bftheta}).
\]
Finally, we conclude that $F(\hat{\bftheta})=M \cdot \nabla H(\hat{\bftheta}).$

We observe that $M$ is a positive definite symmetric matrix with eigenvalues $1$ (with multiplicity $n-1$) and $n+1$ (with multiplicity 1).  Hence, $D\bm{F}(\hat{\bftheta}) = M  \cdot D^2H(\hat{\bftheta}).$ We denote by $M^{1/2}$ the symmetric square root matrix verifying $M^{1/2} M^{1/2}=M$.

In order to see that $D\bm{F}(\hat{\bftheta})$ has real eigenvalues we can see that it is similar to a symmetric matrix. More precisely, we claim that the matrices  $M  \cdot D^2H(\hat{\bftheta})$ and $M^{1/2} \cdot D^2H(\hat{\bftheta})  \cdot M^{1/2} $ are similar. To see the claim,  we notice that
\[
M \cdot D^2H(\hat{\bftheta}) = M^{1/2} M^{1/2} \cdot D^2H(\hat{\bftheta}) =  M^{1/2}  ( M^{1/2} \cdot D^2H(\hat{\bftheta}) \cdot M^{1/2} ) M^{-1/2}
\]
\noindent proving thus that $D\bm{F}(\hat{\bftheta})$ has real eigenvalues.

\end{proof}

\begin{remark}\label{remark:vaps_reals}

In this rotation frame \eqref{eq:main1}, the Jacobian matrix $D\bm{F}(\hat{\bftheta})$ is not symmetric. However, from Lemma~\ref{lem:DF}(c) we know that $D\bm{F}(\hat{\bftheta})$ has real eigenvalues since it is similar to a symmetric matrix.

\end{remark}

\subsection{Phase-locked equilibria and stability}\label{subsec:equilibria_stability}

An equilibrium (phase-locked) solution of~\eqref{eq:main1} is a point $\bftheta^*\in\mathbb{T}^n$ satisfying
\[
\hat{\bfomega}+K\,\bm{F}(\bftheta^*)=\bm{0}.
\]
Following \cite{Bronski_DeVille_Park}, we call this equilibrium \emph{stable} if the Jacobian $D\bm{F}(\bftheta^*)$ exhibits  negative eigenvalues. We denote this condition by $D\bm{F}(\bftheta^*) \prec 0$. As we mentioned above, the non symmetric matrix $D\bm{F}(\bftheta^*)$ has real eigenvalues (see Remark~\ref{remark:vaps_reals}).
%We say that $\hat{\bftheta}^*$ is an {\it equilibrium solution} of the system \eqref{eq:main1} as long as $\hat{\omega}_i + K   \bm{F} ( \hat{\bftheta}) =\bf0$. Following  \cite{Bronski_DeVille_Park} we introduce the notion of a {\it stable equilibrium solution} of the Kuramoto model \eqref{eq:main1}.
%\begin{defi}
%We say  that $\bftheta^*=(\theta^*_1, \ldots,\theta^*_n)$  is an stable solution of \eqref{eq:main1} if and only if
%\begin{enumerate}
%\item[(a)]  $\hat{\bfomega} + K \bm{F} (\hat{\bftheta}^*)=\bm{0}$,
%\item[(b)] $D\bm{F}(\hat{\bftheta}^*)$  is negative definite.
%\end{enumerate}
%\end{defi}
Notice that by introducing the rotating variables $\hat{\bftheta}=(\hat{\theta}_1, \ldots, \hat{\theta}_n)$ we have eliminated the translated solution and now the equilibrium solutions are isolated. Furthermore, the number of equilibrium solutions is finite; one can see this by rewriting the fixed-point equations as a polynomial system and invoking B\'ezout-type arguments (see \cite{Morse_Baillieul}),   and recent work further investigates the multiplicity and stability of these solutions in the finite-size Kuramoto model \cite{NumberSolutions}. Other relevant works on this subject provide interesting insights, see for instance   \cite{Hsiao2025equivalence,Hsiao2026Complexity}.

There are many introductory reviews on the state of the art of the Kuramoto models (see, among others, \cite{Dorfler_survey, Strogatz_survey, AlexSurvey}). We briefly recall some properties of this model.
We can distinguish between two different behaviors from the dynamical system point of view.
The first one corresponds to the case when $\hat{\bfomega}=\bf 0$. In this case, the Kuramoto model can be described as a potential system generated by a map $V:\mathbb T^n \mapsto \mathbb R$ and, basically, the flow is a downhill flow where the $\alpha$-limit and the $\omega$-limit of every orbit is an equilibrium point of the system. Furthermore, the stable equilibrium solutions of the system correspond to local minima of the map $V$. This case can be understood following the Morse Theory, mainly developed by J.~Milnor (\cite{MorseTheory_Milnor}). Moreover, $\bftheta^*=\bf 0$ is always a stable solution of the system, and the parameter $K$ has no effect on the system.

The second case corresponds to $\hat{\bfomega} \neq \bf 0$, where a critical coupling exists:
 \[
K_{\ell} \;=\; \inf \Bigl\{\, K>0 \;:\; \exists\, \boldsymbol{\theta}^* \in \mathbb T^n
\text{ stable equilibrium of~\eqref{eq:main1}} \Bigr\}.
\]
$K_{\ell}$ separates the case $K > K_{\ell}$, in which the system exhibits a stable equilibrium state, from the case $K < K_{\ell}$, in which no equilibrium solution exists. However, the Kuramoto model can present many different equilibria solution. In our approach we can follow, as a function of $K$, a selected principal branch of stable equilibrium. More precisely, using that $D\bm{F}({\bf0})$ is invertible we can construct a principal branch $K \mapsto \hat{\bftheta}(K)$, for sufficiently large $K$. Besides, we can continue this branch as long as the Jacobian
$D\bm{F}(\hat{\bftheta}(K))$ remains non-singular. We collect the main properties of this principal branch $K \mapsto\hat{\bftheta}(K)$ in Theorem~A. See Figure~\ref{fig:order_parameter} for an example of the map $ K \mapsto R(\hat{\bftheta}(K))$, where $R$ is the order parameter. Finally, the solution $\bftheta^*=\bf0$ is never an equilibrium solution of the system.

%Moreover, in this case, the Kuramoto model can exhibit a chaotic behavior (\cite{Chaos_Kura1,Chaos_Kura2}).

%The behavior of the Kuramoto model when $\hat{\bfomega} \neq \bf %0$ can be completely different from the downhill flow. Fixing the value of $\hat{\bfomega}$, it is possible to follow the stable equilibrium solution $\bftheta^*(K)$ depending on the $K$-parameter.

\begin{definition}
Let $\hat{\bfomega} \neq \bf0$. We consider the auxiliary map
\[
G(K,\hat{\bftheta}): = \hat{\bfomega} + K \bm{F}(\hat{\bftheta}), \qquad K > 0, \, \hat{ \bftheta} \in \mathbb T^n.
\]
Since $\bm{F}(\bf0)=\bf0$ and $D\bm{F}(\bf0)$ are invertible, there exist an open interval $(K_{br},\infty) \subset (0,\infty)$  and a unique $\mathcal C^1$ map
\[
\hat{\bftheta}_{pr}: (K_{br},\infty) \to \mathbb T^n
\]
verifying $G(K,\hat{\bftheta}_{pr}(K))=\bf0$ and $D\bm{F}(\hat{\bftheta}_{pr}(K)) \prec 0 ,$
and  $$\displaystyle \lim_{K \to \infty} \hat{\bftheta}_{pr}(K)=0.$$
This branch is called the {\it principal branch}. Moreover, we know that $K_{\ell} \leq K_{br}$.
\end{definition}

\begin{figure}[tbh]
    \centering
     \includegraphics[width=0.90\textwidth]{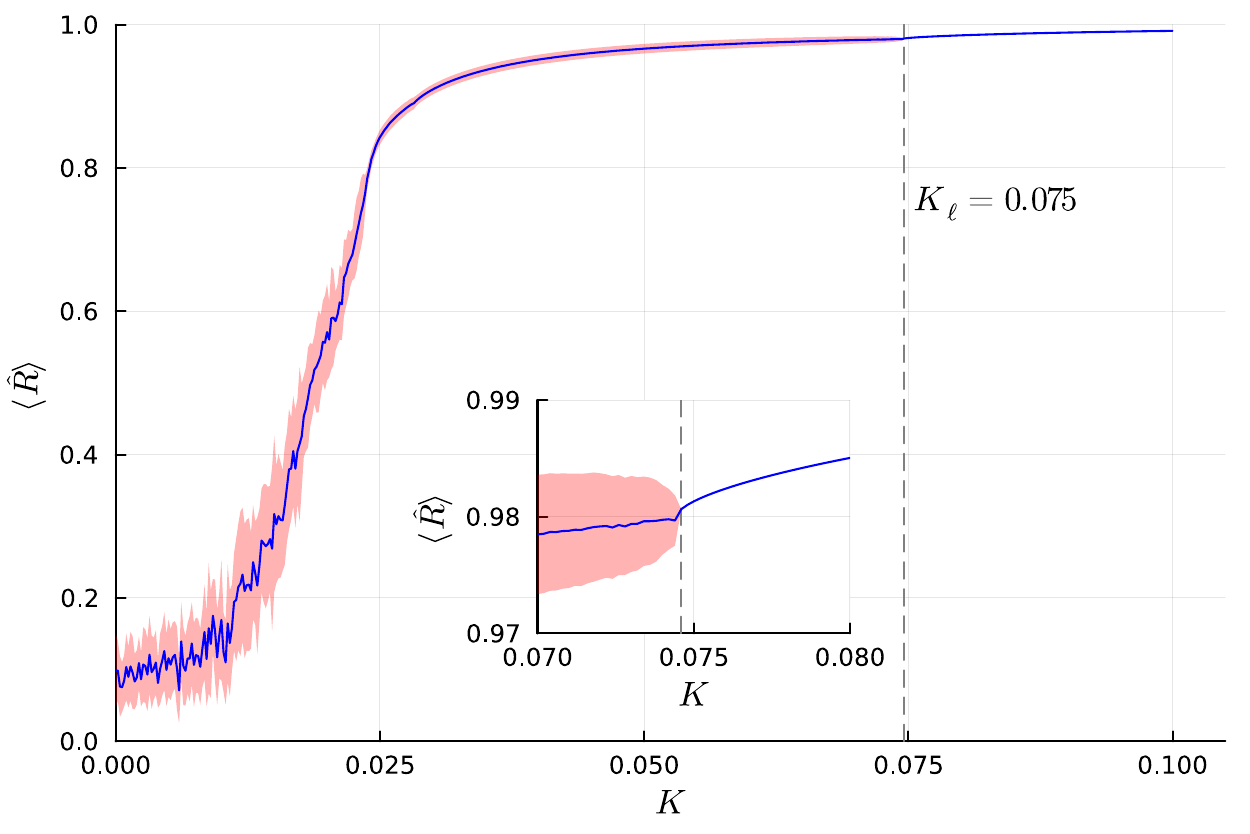}
           \caption{\small{
Dependence of the Kuramoto order parameter $\hat{R}$ on the coupling strength $K$ for a fixed nonzero frequency vector $\hat{\bfomega}$ with $n+1=100$. The frequencies are randomly drawn from a normal $N(0,1)$ distribution, setting $\omega_{n+1}=0$, adding an outlier at $\omega_1$ (equal to three times the largest of the absolute values of the other frequencies) to increase the heterogeneity. For each value of $K$, the system~\eqref{eq:main1} is solved using the 4th~order Runge-Kutta method up to $t_{\max}=200$. The last 20\% of the evolution is used to find the average $\langle \hat{R} \rangle$ and the standard deviation $\sigma_{\hat{R}}$ of the order parameter for each $K$. In this plot, we represent $\langle \hat{R} \rangle$ as a blue solid line, surrounded by a pale-red ribbon of width $\sigma_{\hat{R}}$. The vertical dashed line at $K=K_{\ell}$ marks the transition to the fully phased-locked transition, and the inset shows a zoom around this transition. Note that, in the fully-locked state $K>K_{\ell}$, the ribbon is almost absent, $\sigma_{\hat{R}}(K)\approx 0$, indicating the perfect locking of the phases due to the stability of the solution. The bound obtained for this example using Eq.~\eqref{eq:K_bound} is $K_{b}=0.905$.}}
      \label{fig:order_parameter}
\end{figure}

\begin{thmA} \label{theoremA}

Let $\hat{\bfomega} \neq \bf0$. The reduced  Kuramoto model
\[
\frac{d\hat{\bftheta}}{dt}= \hat{\bfomega} + K   \bm{F} ( \hat{\bftheta})
\]
\noindent satisfies the following:
\begin{enumerate}[(a)]
\item {\bf Nonexistence for small coupling.} There exists $K_0>0$ such that, for every $K \in (0,K_0)$,  the system has no equilibrium solutions.
\item {\bf Existence of a principal branch for large coupling.} There exists a principal branch
\[
\hat{\bftheta}_{pr}: (K_{br},\infty) \to \mathbb T^n
\]
defined on a maximal open interval $(K_{br},\infty) \subset (0,\infty)$, such that $$ \hat{\bfomega} + K \bm{F}(\hat{\bftheta}) ={\bf0}, \quad K \in (K_{br},\infty), $$
and $$\displaystyle \lim_{K \to \infty} \hat{\bftheta}_{pr}(K)=0.$$
\item {\bf Monotonicity of the order parameter.} Define
$$ R_{pr}(K):=\hat{R}( \hat{\bftheta}_{pr}(K)), \quad K > K_{br}.$$
Then $R_{pr}$ is a $\mathcal C^1$
 map and strictly increasing on $(K_{br},\infty)$, and
 $$
 \displaystyle \lim_{K \to \infty} R_{pr}(K)=1.
 $$
 \item {\bf Blow-up at the stability boundary.} It is verified that
 $$
\frac{d}{dK} R_{pr}(K) \to +\infty , \quad \text{ as }  K \to K_{br}^+.
 $$
 \end{enumerate}

%exhibits the following dichotomy
%\begin{enumerate}[(a)]
%\item for $K$ sufficiently small, the Kuramoto model has no equilibrium %solutions and,
%\item for $K$  sufficiently large, the Kuramoto model has a stable %equlibrium solution $\bftheta^*(K)$.
%\end{enumerate}
%Moreover, the order mapping
% \[
% \begin{array} {lccl}
% R: &  [K_{\ell} \, , +\infty) &  \mapsto  &  [0,1] \\
%  & K &\mapsto & R(\bftheta^*(K))
%  \end{array}
% \]
% is a  estrictly increasing map verifying $\displaystyle \lim_{K \to %\infty} R(\bftheta^*(K))=1$ and  $\displaystyle \lim_{K \to K_{\ell}^+} %R'(\bftheta^*(K))=+\infty$.
 \end{thmA}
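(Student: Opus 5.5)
The plan is to treat the four items in order. Items (a)--(c) follow from boundedness of $\bm{F}$, the implicit function theorem, and the gradient structure of Lemma~\ref{lem:DF}. Item (d) concerns the behaviour at the end of the branch and is where the real work lies.

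\textbf{(a) Nonexistence.} Each component of $\bm{F}$ is a sum of at most $2n$ sines, by \eqref{eq:f_hat}. Hence $\|\bm{F}(\hat{\bftheta})\|_\infty \le 2n$ for every $\hat{\bftheta}\in\mathbb T^n$. An equilibrium requires $\hat{\bfomega}=-K\bm{F}(\hat{\bftheta})$, so $\|\hat{\bfomega}\|_\infty\le 2nK$. Taking $K_0=\|\hat{\bfomega}\|_\infty/(2n)>0$, which is positive because $\hat{\bfomega}\neq\bf0$, gives (a).

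\textbf{(b) Principal branch.} I will set $\varepsilon=1/K$ and study $\Phi(\varepsilon,\hat{\bftheta})=\varepsilon\hat{\bfomega}+\bm{F}(\hat{\bftheta})$. We have $\Phi(0,\bf0)=\bf0$, and $D_{\hat{\bftheta}}\Phi(0,\bf0)=D\bm{F}(\bf0)=-(n+1)I$ by \eqref{eq:DF_0}. The implicit function theorem then gives a unique $\mathcal C^1$ curve $\hat{\bftheta}(\varepsilon)$ for small $\varepsilon\ge0$ with $\hat{\bftheta}(0)=\bf0$. In terms of $K$, this yields a branch on some interval $(K_1,\infty)$ with $\hat{\bftheta}_{pr}(K)\to\bf0$ as $K\to\infty$.

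Next I will continue the branch by the implicit function theorem as long as $D\bm{F}(\hat{\bftheta}_{pr}(K))$ stays nonsingular, and let $(K_{br},\infty)$ be the maximal interval obtained this way. Part (a) forces $K_{br}\ge K_0>0$. Stability $D\bm{F}\prec0$ propagates along the whole branch: the eigenvalues are real by Lemma~\ref{lem:DF}(c) and depend continuously on $K$, so none can cross $0$ without making the matrix singular.

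\textbf{(c) Monotonicity.} Since $\hat R=\sqrt{2H}/(n+1)$, it suffices to show that $K\mapsto H(\hat{\bftheta}_{pr}(K))$ is strictly increasing. Differentiating $\hat{\bfomega}+K\bm{F}(\hat{\bftheta})=\bf0$ in $K$ gives $\bm{F}+K\,D\bm{F}\,\hat{\bftheta}'=0$. Lemma~\ref{lem:DF}(a),(b) gives $\bm{F}=M\nabla H$ and $D\bm{F}=M D^2H$, so
\[
\hat{\bftheta}'=-\tfrac1K (D^2H)^{-1}\nabla H,\qquad \frac{d}{dK}H(\hat{\bftheta}_{pr}(K))=-\tfrac1K\,\nabla H^{T}(D^2H)^{-1}\nabla H .
\]
The matrix $D\bm{F}=M D^2H$ is similar to $M^{1/2}D^2H M^{1/2}$, so $D\bm{F}\prec0$ is equivalent to $D^2H$ being negative definite. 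Moreover $\nabla H=-M^{-1}\hat{\bfomega}/K\neq\bf0$. Hence the derivative above is strictly positive.

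The map is $\mathcal C^1$ because $H>0$ along the branch. Indeed, $H\to (n+1)^2/2$ as $K\to\infty$, and $H$ is increasing, so it stays bounded away from $0$ and the square root is differentiable. Finally, $\hat{\bftheta}_{pr}(K)\to\bf0$ gives $R_{pr}(K)\to1$.

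\textbf{(d) Blow-up.} This is the main obstacle. First I will show that $D\bm{F}$ degenerates at the end of the branch. By compactness of $\mathbb T^n$, take an accumulation point $\hat{\bftheta}^*$ of $\hat{\bftheta}_{pr}(K)$ as $K\to K_{br}^+$. If $D\bm{F}(\hat{\bftheta}^*)$ were nonsingular, the implicit function theorem at $(K_{br},\hat{\bftheta}^*)$ would extend the branch below $K_{br}$, contradicting maximality. So the top eigenvalue $\mu(K)$ of the symmetric matrix $D^2H(\hat{\bftheta}_{pr}(K))$ satisfies $\mu(K)\to0^-$, at least along a sequence.

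Let $v(K)$ be a unit eigenvector for $\mu(K)$. Then
\[
\frac{dH}{dK}\ \ge\ \frac{1}{K}\,\frac{\langle\nabla H,v\rangle^2}{|\mu(K)|}.
\]
The blow-up of $dR_{pr}/dK$ therefore reduces to showing that $\langle\nabla H(\hat{\bftheta}^*),v^*\rangle\neq0$ for the kernel vector $v^*$ of $D^2H(\hat{\bftheta}^*)$. Since $R_{pr}$ is bounded below away from $0$, the blow-up of $dH/dK$ transfers to $dR_{pr}/dK$.

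To prove $\langle\nabla H(\hat{\bftheta}^*),v^*\rangle\neq0$, I would try a contradiction argument. If the component vanished, then $\hat{\bfomega}\propto M\nabla H$ would lie in the range of $D\bm{F}(\hat{\bftheta}^*)$. A Lyapunov--Schmidt reduction on the one-dimensional kernel would then show that the equilibrium continues through $K_{br}$ as a fold-free (transcritical or pitchfork) crossing rather than a fold. I would rule this out using the rotational symmetry of $H$ and the strict concavity of $H$ along the branch. Failing that, I would appeal to the paper's convexity of the image of the stability region, which forces $K_{br}$ to be a genuine fold, that is, the point where the ray $t\hat{\bfomega}$ leaves that convex set. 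Justifying this nondegeneracy at the fold is the step I expect to be hardest.
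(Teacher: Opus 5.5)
\textbf{Parts (a)--(c).} These follow the paper's route: the bound $|\hat f_i|\le 2n$, continuation from the origin by the implicit function theorem, and $\frac{dH}{dK}=-\frac1K\nabla H^T(D^2H)^{-1}\nabla H$ via Lemma~\ref{lem:DF}. You are in places more careful than the paper, for instance in noting $\nabla H=-M^{-1}\hat{\bfomega}/K\neq\bm{0}$ and $D\bm F\prec0\Leftrightarrow D^2H\prec0$. There are two small slips. First, since $H$ increases with $K$, monotonicity bounds $H$ from above, not from below, as $K\downarrow K_{br}$. Positivity of $H$ on the branch comes instead from $\nabla H\neq\bm 0$, because $\nabla H$ vanishes wherever $z:=1+\sum_j e^{i\hat\theta_j}=0$. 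Second, in (d) the passage from $H$ to $R_{pr}$ uses $2H\le(n+1)^2$, i.e.\ $R_{pr}'=H'/\bigl((n+1)\sqrt{2H}\bigr)\ge H'/(n+1)^2$, not a lower bound on $R_{pr}$.

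\textbf{The gap is in (d).} You reduce the blow-up to $\langle\nabla H(\hat{\bftheta}^*),v^*\rangle\neq0$ for $v^*\in\ker D^2H(\hat{\bftheta}^*)$, but you do not prove it, and neither sketched route closes it. A Lyapunov--Schmidt analysis gives no contradiction with maximality: a non-fold crossing at $K_{br}$ would be compatible with the stable branch ending there with $dH/dK$ bounded. Convexity of $\bm F(\Omega)$ only supplies a supporting hyperplane at $-\hat{\bfomega}/K_{br}$. You would still have to show that its normal annihilates the range of $D\bm F(\hat{\bftheta}^*)$, which is essentially the claim itself. The paper does not fill this either: its proof of (d) infers $dH/dK\to+\infty$ from the divergence of one eigenvalue of $(D^2H)^{-1}$, which is exactly the step you rightly flagged as insufficient.

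\textbf{A direct way to close it.} Write $z=|z|e^{i\psi}$, $\phi_0=-\psi$, $\phi_k=\hat\theta_k-\psi$, $c_k=\cos\phi_k$, $s_k=\sin\phi_k$. Then $\sum_{j=0}^n c_j=|z|$ and $\sum_{j=0}^n s_j=0$. With $c=(c_1,\dots,c_n)^T$ and $s=(s_1,\dots,s_n)^T$, one has
\[
\nabla H=-|z|\,s,\qquad D^2H=cc^T+ss^T-|z|\,\mathrm{diag}(c).
\]
At the limit point $D^2H\preceq0$. Its diagonal entries are $1-\mathrm{Re}(\bar z e^{i\hat\theta_k})$, so this forces $z\neq0$ and $c_k\ge 1/|z|>0$ for $1\le k\le n$. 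Now suppose $D^2Hv=0$ and $s^Tv=0$. Then $c_k(c^Tv-|z|v_k)=0$ gives $v=\frac{\alpha}{|z|}\mathbf 1$ with $\alpha=c^Tv\neq0$. The identity $c^Tv=\alpha$ then forces $c_0=0$, and $s^Tv=0$ forces $s_0=0$, which is impossible. So transversality holds at every singular point of $\overline\Omega$.

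To finish, take any sequence $K_m\to K_{br}^+$ and extract a subsequence with $\hat{\bftheta}_{pr}(K_m)\to\hat{\bftheta}^*$ and $v(K_m)\to v^*\in\ker D^2H(\hat{\bftheta}^*)$. Your estimate $\frac{dH}{dK}\ge\frac1K\langle\nabla H,v\rangle^2/|\mu|$ then blows up along it. This gives the genuine limit $dR_{pr}/dK\to+\infty$, not merely divergence along one sequence as in your write-up.
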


% We now define the notion of a stable equilibrium solution following \cite{Bronski_DeVille_Park}.

We denote by $\Omega$ the set of frequencies $\hat{\bftheta} \in \mathbb T^n$  where all the  eigenvalues of $DF(\hat{\bftheta})$ are strictly negative, denoted by $D\bm{F}( \hat{\bftheta}) \prec 0 $. Thus,

\[
\Omega =  \{  \hat{\bftheta} \in \mathbb T^n \, | \,  D\bm{F}( \hat{\bftheta}) \prec 0   \}.
\]
We recall that the Jacobian matrix $D\bm{F}( \hat{\bftheta})$ has real eigenvalues (see Remark \ref{remark:vaps_reals}). The behavior of the order parameter $R$ depending on the coupling strength $K$ for a fixed value of $\hat{\bfomega}$ is depicted in Figure~\ref{fig:order_parameter}.

Hereafter, we fix a value of $\hat{\bfomega}$. We briefly describe the method to find an upper bound, denoted by $K_b$, of the critical parameter $K_{\ell}$. Thus, our method produces a fully explicit value $K_b$ satisfying $K_b \ge K_{br} \geq K_{\ell}$, obtained from a constructive outer approximation of the convex stability image.

\section{Convex Geometry and Main Bound}

By definition $\Omega$ is an open set. We first find $2n$ points on the boundary of $\Omega$ (see Lemma~\ref{lem:boundary_omega}). We denote these ``marked'' points  in $\partial \Omega$ by $p_i^{\pm}$ for $1 \leq i \leq n$.
We consider $\bm{F}(\Omega)$ as a set which is a convex set. See Figure~\ref{fig:Omega_F_Omega}. This property can be deduced from  \cite{Bronski_DeVille_Park}, see subsection \ref{subsec:Proof_B}.

\begin{remark}\label{rem:symmetries}
From the definition of $\Omega$ we can easily conclude that $\hat{\bftheta} \in \Omega$ if and only if $-\hat{\bftheta} \in \Omega$ since $D\bm{F} (\hat{\bftheta})=D\bm{F} (-\hat{\bftheta})$. In a similar way we have that $\hat{\bftheta} \in \bm{F}( \Omega) $ if and only if $-\hat{\bftheta} \in \bm{F}( \Omega)$. In this case, we notice that $\bm{F}(-\hat{\bftheta})=-\bm{F}(\hat{\bftheta})$.
\end{remark}

We have the following dichotomy:
\begin{enumerate}[(a)]
\item If $- \frac{\hat{\bfomega}}{K}$ belongs to  $\bm{F} (\Omega)$ then the Kuramoto model exhibits a stable solution.
\item If $-\frac{\hat{\bfomega}}{K}$  does not belong to $\bm{F} (\Omega)$ then the Kuramoto model does not exhibit a stable solution.
\end{enumerate}

\begin{figure}[tbp]
    \centering
     \includegraphics[width=0.9\textwidth]{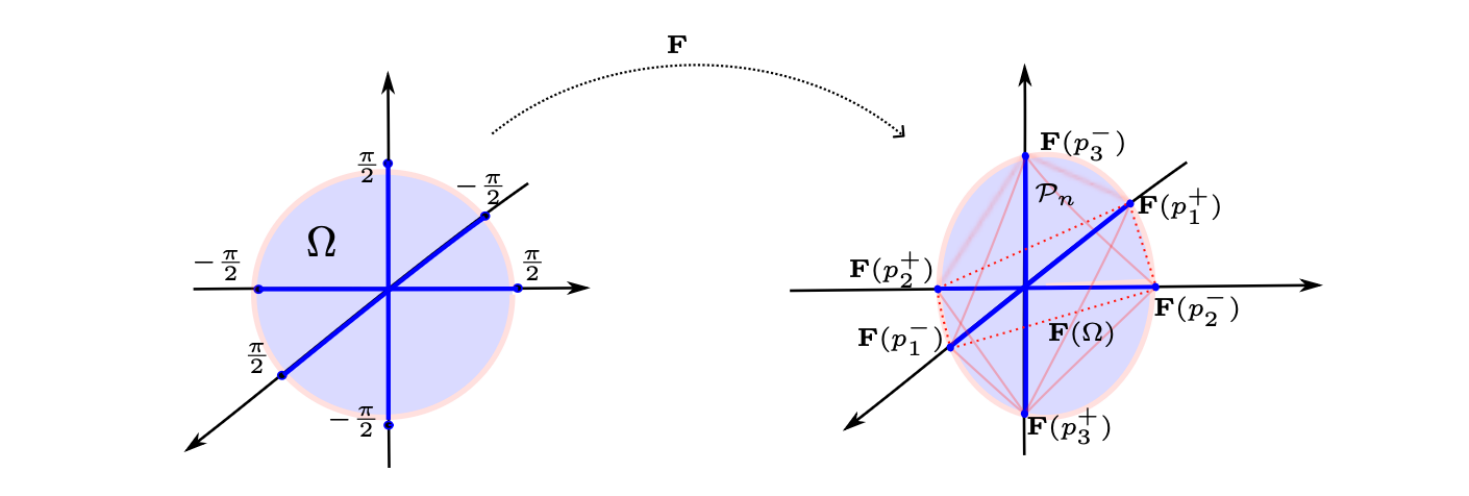}
 % \put(-267,47){\tiny $-\frac{\pi}{2}$}
 % \put(-240,50){\small $\Omega$}
 % \put(-190,47){\tiny $\frac{\pi}{2}$}
 % \put(-205,65){\tiny $-\frac{\pi}{2}$}
 % \put(-229,70){\tiny $\frac{\pi}{2}$}
 % \put(-235,10){\tiny $-\frac{\pi}{2}$}
 % \put(-255,25){\tiny $\frac{\pi}{2}$}
 % \put(-65,85){\tiny $\bm{F}(\Omega)$}
 % \put(-155,98){\tiny $\bm{F}$}
 % \put(-70,65){\tiny $\mathcal P_n$}
 % \put(-97,14){\tiny $\bm{F}(p_1^-)$}
 % \put(-55,12){\tiny $\bm{F}(p_2^-)$}
 % \put(-52,70){\tiny $\bm{F}(p_1^+)$}
 % \put(-107,70){\tiny $\bm{F}(p_2^+)$}
 % \put(-115,46){\tiny $\bm{F}(p_3^+)$}
 % \put(-47,36){\tiny $\bm{F}(p_3^-)$}
\caption{\small
Convex-geometric construction used to bound the critical coupling.
Left: the stability region $\Omega\subset\mathbb T^n$ (shaded) defined by $\mathrm{Spec}(D\bm{F}(\hat{\bftheta}))\subset(-\infty,0)$, with marked boundary points $p_i^\pm\in\partial\Omega$.
Right: its convex image $\bm{F}(\Omega)\subset\mathbb R^n$ (shaded) and the polytope $\mathcal P_n=\mathrm{conv}\{\bm{F}(p_i^\pm)\}$ (red), which provides an explicit outer approximation for locating the first intersection of the ray $t\hat{\bfomega}$ and hence an upper bound $K_b\ge K_{\ell}$.}
    \label{fig:Omega_F_Omega}
\end{figure}

Thus, $K_{br}$ is just the value where $- \frac{\hat{\bfomega}}{K}$ belongs to the boundary of $\bm{F} (\Omega)$. We know that $ \frac{\hat{\bfomega}}{K}$ belongs to $\bm{F} (\Omega)$ if and only if  $ -\frac{\hat{\bfomega}}{K}$ belongs to $\bm{F} (\Omega)$, see Remark \ref{rem:symmetries}. Related formulations of synchronization conditions in terms of functional inequalities and seminorms are developed in \cite{CliqueDeVille}.
Here, we construct a polytope $\mathcal P_n$ with $2n$ points in $\partial \bm{F}(\Omega)$.  Moreover, we can compute the faces of this polytope explicitly, see Lemma~\ref{lem:polytope_n}. We refer to \cite{ConvexPolytopesBranko} for background on convex polytopes and their face structure.
The geometry of the convex set $\bm{F}(\Omega)$ is closely related with the critical parameter $K_{\ell}$. The polytope $\mathcal P_n$ is the smallest convex set containing the marked points $\bm{F}(p_i^{\pm})$ for $1 \leq i \leq n$. Thus, considering the ray $t \hat{\bfomega}$ for $t >0$, we intersect first the boundary of the polytope $\mathcal P_n$ than the boundary of $\bm{F}(\Omega)$, see Figure~\ref{fig:Omega_F_Omega}. Finally, we are able to find the point $t_b$ such that $t_b\hat{\bfomega}$ belongs to the boundary of the polytope ${\mathcal P}_n$, and with it we can write $K_b=1/t_b$. In Theorem~B we show the analytical value of this upper bound $K_b$ of $K_{br}$ and thus an upper bound of the critical value $K_{\ell}$. This bound reflects the chosen outer approximation and is generally not expected to coincide with the exact boundary except in special geometric configurations.

\begin{thmB}
We consider the Kuramoto model~\eqref{eq:main1}  with $ \hat{\bfomega}=(\hat{\omega}_1, \ldots, \hat{\omega}_n) \neq \bf0.$  Then, the value

%Let $$\lambda_j =  \frac{1}{n+1} \left[ \hat{\omega}_j + \displaystyle  \sum_{k=1}^n \hat{\omega}_k\right ]  \text{  for }  1 \leq j \leq n.$$
%We denote by $i_1, \ldots, i_k$ the indexes such that $\lambda_j\leq0$ and $i_{k+1}, \ldots, i_n$ the indexes with $\lambda_j >0$.
\begin{equation}\label{eq:K_bound}
K_b = \frac{1}{n+1} \displaystyle \sum_{j=1}^n |\hat{\omega}_j|,
\end{equation}
is an upper bound of the critical parameter $K_{\ell}$.
\end{thmB}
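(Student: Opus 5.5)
The plan is to make the convex-geometric picture above concrete. I will exhibit $2n$ explicit points $p_i^{\pm}\in\partial\Omega$ whose images $\bm{F}(p_i^\pm)$ span a cross-polytope. I will then use the convexity of $\bm{F}(\Omega)$ together with the dichotomy (a)--(b) to convert the gauge of that cross-polytope into a coupling bound. The natural candidates are $p_i^{\pm}=\pm\frac{\pi}{2}\,e_i$, where $e_i$ is the $i$-th coordinate vector of $\mathbb T^n$.

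\emph{Step 1 (images of the marked points).} From \eqref{eq:f_hat}, at $\hat{\bftheta}=\frac{\pi}{2}e_i$ we get
\[
\hat f_i=-1-1+(n-1)\sin(-\tfrac{\pi}{2})=-(n+1),\qquad \hat f_k=0-1+\sin(\tfrac{\pi}{2})=0\quad(k\neq i).
\]
Hence $\bm{F}(p_i^+)=-(n+1)e_i$, and by Remark~\ref{rem:symmetries} $\bm{F}(p_i^-)=(n+1)e_i$. The polytope $\mathcal P_n=\mathrm{conv}\{\bm{F}(p_i^\pm)\}$ is therefore the $\ell^1$-ball $\{x:\|x\|_1\le n+1\}$. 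Its facets are the hyperplanes $\sum_j \epsilon_j x_j=n+1$ with $\epsilon\in\{\pm1\}^n$, which gives the explicit face description referred to as Lemma~\ref{lem:polytope_n}. The ray $t\hat{\bfomega}$ meets $\partial\mathcal P_n$ at $t_b=(n+1)/\sum_j|\hat\omega_j|$, so $K_b=1/t_b$ is exactly \eqref{eq:K_bound}.

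\emph{Step 2 (the marked points lie in $\partial\Omega$).} Using \eqref{eq:derivativef} at $\frac{\pi}{2}e_i$, I compute $D\bm{F}$ directly. Column $i$ vanishes identically, since $\cos(\frac{\pi}{2}-\frac{\pi}{2})\cdot$ terms cancel and the diagonal entry is $-2\cos\frac{\pi}{2}-\sum_{j\ne i}\cos(-\frac{\pi}{2})=0$. For $k\neq i$ the diagonal entry is $-n$, and every other entry outside row $i$ is $0$. So the spectrum is $\{0,-n,\dots,-n\}$: the matrix is singular and negative semidefinite. I then check that along the segment $s\mapsto s\frac{\pi}{2}e_i$, $s\in[0,1)$, all eigenvalues stay strictly negative. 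Through Lemma~\ref{lem:DF}(b),(c) this reduces to negativity of the symmetric matrix $M^{1/2}D^2H M^{1/2}$, which for this one-parameter family has a simple block structure with eigenvalues that are explicit in $\cos(s\pi/2)$. This places $p_i^\pm$ in $\overline{\Omega_0}\cap\partial\Omega$, where $\Omega_0$ is the connected component of $\Omega$ containing $\bf0$.

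\emph{Step 3 (conclusion).} By the convexity of $\bm{F}(\Omega)$ (more precisely of $\bm{F}(\Omega_0)$, which is the form of the result taken from \cite{Bronski_DeVille_Park}) and continuity of $\bm{F}$, we get $\mathcal P_n\subset\overline{\bm{F}(\Omega_0)}$. For a convex set $C$ with nonempty interior, $\mathrm{int}\,\overline C=\mathrm{int}\,C$. Now take $K>K_b$. Then $\|\hat{\bfomega}/K\|_1<n+1$, so $-\hat{\bfomega}/K\in\mathrm{int}\,\mathcal P_n\subset\bm{F}(\Omega_0)$. Dichotomy (a) then yields a stable equilibrium, so $K_\ell\le K_{br}\le K_b$; the middle inequality holds because the principal branch lives in $\Omega_0$.

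The main obstacle I expect is Step 2 together with the precise form of the convexity statement. One must make sure the $p_i^\pm$ are limits of points of the \emph{same} convex piece of $\bm{F}(\Omega)$ that carries the principal branch, rather than merely points with a singular Jacobian. My first attempt will be the explicit segment computation; if it becomes messy, I will instead argue by a determinant sign along the segment, checking that $\det D\bm{F}(s\frac{\pi}{2}e_i)\ne0$ for $s<1$ so that no eigenvalue can cross zero before $s=1$.
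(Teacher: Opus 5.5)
Your proposal is correct and follows essentially the paper's route. The paper likewise places $p_i^\pm=\pm\frac{\pi}{2}e_i$ on $\partial\Omega$ via the triangular Jacobian along the segment, notes that their images $\mp(n+1)e_i$ span the $\ell^1$-ball of radius $n+1$ with facets $\sum_j\varepsilon_jx_j=n+1$, imports convexity of $\bm{F}(\Omega)$ from Bronski--DeVille--Park, and intersects the ray to get $t_b=(n+1)/\sum_j|\hat\omega_j|$.

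Your closure/interior step ($\mathrm{int}\,\overline{C}=\mathrm{int}\,C$, then concluding for every $K>K_b$) tightens the paper's unqualified claim that $\mathcal P_n\subset\bm{F}(\Omega)$ ``by construction'', which is not literal since the vertices are images of boundary points. That step already gives $K_\ell\le K_b$ directly. So neither the passage to $\Omega_0$ nor your aside $K_{br}\le K_b$ is needed; the latter would in any case require injectivity of $\bm{F}$ on $\Omega_0$, which you have not established.
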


% We remark that, assuming
% $ \displaystyle \sum_{k=1}^n \hat{\omega}_k=0$ as an extra hypothesis of Theorem~B, we obtain a simpler expression for the upper bound, namely $\displaystyle K_b= \frac{1}{n+1} \sum_{k=1}^n |\hat{\omega}_k|.$

 In Section~\ref{sec:Proofs} we prove the main results of this paper, Theorems~A and~B.

 \section{Proofs and Geometrical Analysis}\label{sec:Proofs}

 \subsection{Proof of Theorem A}

 %\begin{proof}[]
 We first prove statement (a). The non-existence of equilibrium solutions for $K$ small enough can be deduced by the boundedness of the maps $\hat{f}_i$~\eqref{eq:f_hat}. More precisely, defining
 $K_0 = \displaystyle \max_{ 1 \leq i \leq n}\left\{ \frac{|\hat{ \omega}_i|}{2n}\right\}>0$,
we claim that the Kuramoto model does not have any equilibrium solution for $K$ in the interval $(0,K_0)$. To prove this claim, we denote by $m$ the index where $K_0$ assumes its maximum. Since functions $f_i$ are sums of $2n$ sinus, we have
\[
 -2n \leq \hat{f}_m(\theta_1, \ldots, \theta_n) \leq 2n
\]
and thus
\[
\hat{\omega}_m - 2Kn \leq \hat{\omega}_m + K \hat{f}_m(\theta_1, \ldots, \theta_n) \leq \hat{\omega}_m + 2Kn.
\]
We firstly assume that $\hat{\omega}_m >0$ and thus $\hat{\omega}_m - 2K n >0$, since $K < K_0=\frac{\hat{\omega}_m}{2n}$, which implies that~\eqref{eq:main1} does not have an equilibrium solution. When $\hat{\omega}_m <0$ we have that $\hat{\omega}_m + 2K n<0$ since $K < K_0= \frac{-\hat{\omega}_m}{2n}$, which also implies the non existence of equilibrium solutions.

We secondly prove statement (b).  When $K$ is large enough the situation is just the converse, i.e. there always exists a stable equilibrium solution. Since $\bm{F}({\bf0})=\bf 0$ and $|D\bm{F}({\bf 0})|\neq 0$, and by the Open Mapping Theorem, there exists an open neighborhood $U$ of the origin where the map $\bm{F}: U \mapsto \bm{F}(U)$ is bijective. Thus, for $K$ sufficiently large, we always have a solution of $\bm{F}({\bftheta}^*)=-\frac{\hat{\bfomega}}{K}$, by just taking $K$ such that $-\frac{\hat{\bfomega}}{K}$ lies in $\bm{F}(U)$, which is a neighborhood of the origin. We denote this solution by $ {\bftheta}^*(K)$. This result basically implies that, for $K$ large enough, we have a stable solution ${\bftheta}^*(K)$ of the reduced Kuramoto model.

 The existence of a principal branch for large values of the parameter $K$ comes since we can follow a stable equilibrium solution via the Implicit Function Theorem. We define the auxiliary function $\bfG(K,\hat{\bftheta}) := \hat{\bfomega} + K \bm{F} (\hat{\bftheta})$. We consider
a sufficiently large  value $K_1$ such that $\bftheta^*$ is a stable solution of the Kuramoto model~\eqref{eq:main1}. In other words, we have a solution of the equation
$\bfG(K,\hat{\bftheta}) = \hat{\bfomega} + K \bm{F} (\hat{\bftheta})=\bf0$ at the point $K=K_1$ and $\hat{\bftheta}=\bftheta^*$.  Applying the Implicit Function Theorem there exists $K_{br}>0$ an a unique $\mathcal C^1$ map
\[
\hat{\bftheta}_{pr}: (K_{br}, \infty) \mapsto \mathbb T^n
\]
such that $\hat{\bftheta}_{pr}(K_1)=\bftheta^*$ and  $\bfG(K,\hat{\bftheta}_{pr}(K))=\bf0$  as long as the determinant $\frac{\partial \bfG}{\partial \bftheta} (\bftheta^*)=K D\bm{F}(\bftheta^*)$ is different from zero.  Thus, we can follow this branch while the Jacobian $D\bm{F}(\hat{\bftheta}_{pr}(K))$ remains non-singular. By construction this stable solution verifies that  $\hat{\bftheta}_{pr}(K) \to \bf0$ as $K \to \infty$.

%More precisely, for all $K \in (K_{br}, \infty)$ the solutions of $\bfG(K,\hat{\bftheta})=\bf0$ are given by $\hat{\bftheta}=\hat{\bftheta}_{pr}(K)$.
%Thus, we can continue locally this solution as long as  the determinant of $\frac{\partial \bfG}{\partial \bftheta} (\bftheta^*)=K D\bm{F}(\bftheta^*)$ is different from zero. By hypothesis, we have $|D\bm{F}(\bftheta^*)| \neq 0$ and thus we can consider the map $K \mapsto \bftheta_{pr}^*(K)$.

We continue proving statement (c). We can now calculate the derivative along this principal branch $\bfG(K,\hat{\bftheta}_{pr}(K))=\hat{\bfomega} + K \bm{F} (\hat{\bftheta}_{pr}(K))=\bf0$ with respect to $K$. We obtain,

\begin{equation}\label{eq:derivative}
\frac{ \partial \hat{\bftheta}_{pr}(K)}{\partial K} = - \frac{1}{K} D\bm{F}(\hat{\bftheta}_{pr}(K))^{-1} \, F(\hat{\bftheta}_{pr}(K)).
\end{equation}

From Lemma~\ref{lem:DF} we know that $\bm{F}(\hat{\bftheta})=M \nabla H(\hat{\bftheta})$ and  $D\bm{F}(\hat{\bftheta})=M D^2H(\hat{\bftheta})$
where the auxiliary map  $H(\hat{\bftheta})$ is related to the order map $\hat{R}$ via  $H(\hat{\bftheta})=\frac{(n+1)^2}{2} \hat{R}(\hat{\bftheta})$.

We can finally compute the derivative of the one-variable map $K \mapsto H(\hat{\bftheta}_{pr}(K))$, where $\hat{\bftheta}_{pr}(K)$ is a stable solution of the Kuramoto model. We claim that $\frac{ d H(\bftheta^*(K))}{dK} >0$, thus proving that the order parameter map is strictly increasing. To see this claim, we combine the above computations and  Equation~\eqref{eq:derivative}:

\begin{eqnarray}
\displaystyle\frac{ d H(\hat{\bftheta}_{pr}(K))}{dK}  & = &  \nabla H(\hat{\bftheta}_{pr}(K))^T \cdot \frac{ \partial \hat{\bftheta}_{pr}(K)}{\partial K} \nonumber
\\
& = & -\frac{1}{K} \nabla H (\hat{\bftheta}_{pr}(K))^T D\bm{F}(\hat{\bftheta}_{pr}(K))^{-1} \, \bm{F}(\hat{\bftheta}_{pr}(K)) \nonumber
\\
& = & -\frac{1}{K}  \nabla H (\hat{\bftheta}_{pr}(K))^T \left[ M D^2H(\hat{\bftheta}_{pr}(K)) \right]^{-1} \, M \nabla H(\hat{\bftheta}_{pr}(K)) \nonumber
\\
& = & -\frac{1}{K}  \nabla H (\hat{\bftheta}_{pr}(K))^T  D^2H(\hat{\bftheta}_{pr}(K)) ^{-1}  \nabla H(\hat{\bftheta}_{pr}(K)) >0. \label{eq:derivative_order}
\end{eqnarray}
Since, by assumption, the linear map $D^2H(\bftheta^*(K))$ is negative definite, and this condition is equivalent to $u^T D^2 H(\bftheta^*(K)) u <0$ for all $u \in \mathbb R^n$ different from zero; and the inverse map of a negative definite map is also negative definite. We notice that the matrix $D^2H(\hat{\bftheta})$ is  symmetric, but the matrix $D\bm{F} (\hat{\bftheta})$ is not.

We finally prove statement (d).
As we mentioned above, we can continue the solution $K \mapsto \hat{\bftheta}_{pr}(K)$ until the determinant of $D\bm{F}(\hat{\bftheta}_{pr}(K))$ is equal to zero, or in other words, until
$\hat{\bftheta}_{pr}(K)$ belongs to the boundary $ \Omega$, since for all $\hat{\bftheta}$ in the boundary of  $\Omega$ we have that $\det(D\bm{F}(\hat{\bftheta}_{pr}(K)))=0$.  From Lemma~\ref{lem:DF}(b) we know that $D\bm{F}(\hat{\bftheta})=M D^2H(\hat{\bftheta})$. Hence, when $K \to K_{br}^+$, the determinant of $D^2H((\bftheta^*(K))$ tends to zero since $M$ is an invertible matrix. In Equation~\eqref{eq:derivative_order} we have computed the derivative of the map $K \mapsto H(\hat{\bftheta}_{pr}(K))$. We finally conclude that $\frac{ d H(\hat{\bftheta}_{pr}(K))}{dK} \to +\infty$,  since at least one eigenvalue of $\left[ D^2H((\bftheta^*(K))\right]^{-1}$ tends to infinity as $K \to K_{br}^+$.

%\frac{\partial \bfH}{\partial \theta_k}  & =   -\sin\theta_k
%\left(1+\displaystyle \sum_{j=1}^n \cos\theta_j \right)+  \cos(\theta_k) \left(\display style\sum_{j=1}^n \sin\theta_j\right) \\
%& =-\sin(\theta_k) + \displaystyle \sum_{j=1}^n [ \cos(\theta_k) \sin(\theta_j) -
%\sin(\theta_k) \cos(\theta_j)] \\
%a & b
 %\\& =-\sin(\theta_k) + \sum_{j=1}^n \sin(\theta_j - \theta_k)

%\[
%\bfH( \theta_1, \ldots, \theta_n) = \frac{1}{2} [1+
%\displaystyle \sum_{j=1}^n e^{i \theta_j}
%]^2
%\]

 %\end{proof}

\subsection{Geometric lemmas}

Theorem~B is a direct result from the following auxiliary Lemmas~\ref{lem:boundary_omega} and~\ref{lem:polytope_n}.
We denote by $\mathcal  C_n = \{ e_1, \ldots, e_n\}$  the canonical basis of $\mathbb R^n$  given by $e_i = (0, \ldots, 0, 1, 0, \ldots, 0)$ with entry 1 at the position $i$ for $1 \leq i \leq n$.  And we consider the $2n$~points given by
$p_i^{\pm} = \pm \frac{\pi}{2} e_i$. We denote these $2n$~points as the \emph{marked points}.

\begin{lemma}\label{lem:boundary_omega}
The $2n$ marked points $p_i^{\pm}$, for $1 \leq i \leq n$,  belong to the boundary of $\Omega$.
\end{lemma}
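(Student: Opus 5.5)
The plan is to use Lemma~\ref{lem:DF} to reduce the spectral condition on $D\bm{F}$ to a definiteness condition on the symmetric Hessian $D^2H$. Then I will show directly that $D^2H(p_i^{\pm})$ is negative semidefinite and singular, while $D^2H(t\,p_i^{\pm})$ is negative definite for $0\le t<1$. Since $t\,p_i^{\pm}\to p_i^{\pm}$ as $t\to1^-$ and $p_i^{\pm}\notin\Omega$, this places $p_i^{\pm}$ in $\partial\Omega$.

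\emph{Step 1 (transfer of inertia).} By Lemma~\ref{lem:DF}(b), $D\bm{F}(\hat{\bftheta})=M\,D^2H(\hat{\bftheta})$, and this matrix is similar to $M^{1/2}D^2H(\hat{\bftheta})M^{1/2}$. The latter is congruent to $D^2H(\hat{\bftheta})$ because $M^{1/2}$ is invertible. By Sylvester's law of inertia, the eigenvalues of $D\bm{F}(\hat{\bftheta})$ have the same signs, with multiplicities, as those of $D^2H(\hat{\bftheta})$. Hence $\hat{\bftheta}\in\Omega$ if and only if $D^2H(\hat{\bftheta})$ is negative definite, and $\det D\bm{F}(\hat{\bftheta})=0$ if and only if $D^2H(\hat{\bftheta})$ is singular.

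\emph{Step 2 (Laplacian form of the Hessian).} From $\partial_k H=-\sin\hat\theta_k+\sum_j\sin(\hat\theta_j-\hat\theta_k)$ I get the entries of $D^2H$: the off-diagonal entries are $\cos(\hat\theta_l-\hat\theta_k)$ and the diagonal entries are $-\cos\hat\theta_k-\sum_{j\ne k}\cos(\hat\theta_j-\hat\theta_k)$. This gives the quadratic form
\[
u^TD^2H(\hat{\bftheta})u=-\sum_{k<l}\cos(\hat\theta_l-\hat\theta_k)(u_k-u_l)^2-\sum_{k}\cos(\hat\theta_k)\,u_k^2 .
\]
So $-D^2H$ is the Laplacian of the complete graph on $n+1$ nodes with edge weights $\cos(\hat\theta_j-\hat\theta_k)$, grounded at node $n+1$ (where $\hat\theta_{n+1}=0$).

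\emph{Step 3 (evaluation along the segment).} At $\hat{\bftheta}=s\,e_i$ every edge not touching node $i$ has weight $1$, and every edge touching node $i$ has weight $\cos s$. For $|s|<\pi/2$ all weights are positive and the graph is connected. A grounded connected Laplacian with positive weights is positive definite, since the form vanishes only when all $u_k$ equal $u_{n+1}=0$. Hence $t\,p_i^{\pm}\in\Omega$ for $t\in[0,1)$.

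At $s=\pm\pi/2$ the edges at node $i$ have weight $0$, so $e_i$ lies in the kernel of $D^2H$. Concretely, the $i$-th row and column vanish, and the remaining block is $-(n+1)I_{n-1}+J_{n-1}$, with eigenvalues $-2$ and $-(n+1)$. Thus $D^2H(p_i^{\pm})$ is singular, so $p_i^{\pm}\notin\Omega$. Together with Step~3 this gives $p_i^{\pm}\in\overline{\Omega}\setminus\Omega=\partial\Omega$, recalling that $\Omega$ is open. The case $p_i^-$ also follows from $p_i^+$ by Remark~\ref{rem:symmetries}.

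I expect the main obstacle to be Step~1. The paper's notion of stability is stated for the non-symmetric matrix $D\bm{F}$, so one must justify that its eigenvalue signs coincide with those of the symmetric $D^2H$. This needs the combination of similarity and Sylvester congruence rather than similarity alone. The remaining steps are explicit computations.
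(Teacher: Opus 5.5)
Your proof is correct, but it takes a different route from the paper. The paper never passes to the Hessian. It evaluates the Jacobian formula \eqref{eq:derivativef} at $s e_1$ and observes that $D\bm{F}(se_1)$ is upper triangular, with eigenvalues $-(n+1)\cos s$ (simple) and $-n-\cos s$ (multiplicity $n-1$). The spectrum is therefore read off directly: it is strictly negative for $s\in[0,\pi/2)$ and acquires a zero exactly at $s=\pi/2$.

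So the step you flag as the main obstacle, transferring inertia from $D\bm{F}$ to $D^2H$ via similarity plus Sylvester, is unnecessary along the coordinate axes. Your Step~1 is nonetheless valid.

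What your route buys is generality. The grounded-Laplacian form of $-D^2H$ gives two criteria:
\begin{itemize}
\item if all weights $\cos(\hat\theta_j-\hat\theta_k)$ and $\cos\hat\theta_k$ are positive, then $\hat\theta\in\Omega$;
\item if the weights vanish on a cut separating a nonempty set of nodes from the ground node, then $D^2H$ is singular.
\end{itemize}
With no extra computation, these show that the diagonal points $d^\pm$ lie in $\partial\Omega$. More generally, so do the points $\pm\frac{\pi}{2}(a_1,\dots,a_n)$ with $a_i\in\{0,1\}$ not all zero. The paper later uses these facts with only an ``easy to see''. In exchange, the paper's route is a two-line computation that needs neither Lemma~\ref{lem:DF} nor Sylvester's law.

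One small slip. At $p_i^\pm$, delete the $i$-th row and column of $D^2H$. The remaining block is $J_{n-1}-nI_{n-1}$, not $J_{n-1}-(n+1)I_{n-1}$, because the term $\cos(\hat\theta_i-\hat\theta_k)$ in each diagonal entry also drops to $0$. Its eigenvalues are $-1$ (simple) and $-n$ (multiplicity $n-2$), not $-2$ and $-(n+1)$. This does not affect your argument, since singularity already follows from the vanishing $i$-th row and column.
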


\begin{proof}
We just prove that the marked point $p_1^+=\frac{\pi}{2} e_1$ belongs to the boundary of $\Omega$. We consider
$s e_1$ with $s \in [0,\frac{\pi}{2}]$, and we can see that, on the one hand,  we have that $se_1$ belongs to  $\Omega$ for $s \in[0,\frac{\pi}{2})$ and, on the other hand, $\frac{\pi}{2} e_1$ belongs to the boundary of $ \Omega$. Easy computations show that  $D\bm{F}(s e_1)$ is given by
%\red{
%\[
%D\bm{F}(se_1) =
%\left(
%\begin{array}{ccccc}
%-(n+1)\cos(s) & \cos(s)-1 & \cos(s)-1 &  \ldots & \cos(s)-1  \\
%0  &   -(n+1)\cos(s) &  \cos(s)-1  & \ldots &  \cos(s)-1  \\
%0  & 0 & -(n+1)\cos(s) &   \ldots & \cos(s)-1 %\\
%\ldots & \ldots &  \ldots & \ldots & \ldots \\
%0 & 0 & 0 &  \ldots & -(n+1)\cos(s)
%\end{array}
%\right).
%\]
%}

%\red{This triangular matrix has a unique eigenvalue at $-(n+1)\cos(s)$ with multiplicity $n$. This eigenvalue is strictly negative for $s \in [0,\frac{\pi}{2})$ and equal to zero for $s=\frac{\pi}{2}$.}

\[
D\bm{F}(se_1) =
\left(
\begin{array}{ccccc}
-(n+1)\cos(s) & \cos(s)-1 & \cos(s)-1 &  \cdots & \cos(s)-1  \\
0  &   -n-\cos(s) &  0  & \cdots &  0  \\
0  & 0 & -n-\cos(s) & \cdots & 0 \\
\vdots & \vdots & \vdots & \ddots & \vdots \\
0 & 0 & 0 &  \cdots & -n-\cos(s)
\end{array}
\right).
\]

%\[
%D\bm{F}(se_1) =
%\left(
%\begin{array}{ccccc}
%-n\cos(s) & \cos(s) & \cos(s) &  %\ldots & \cos(s)  \\
%\cos(s)  &   -\cos(s)-(n-1) &  1  & \ldots &  1  \\
%\cos(s)  & 1 & -\cos(s)-(n-1) &   \ldots & 1 \\
%\ldots & \ldots &  \ldots & \ldots & \cos(s)-1\\
%\cos(s) & 1 & 1 &  \ldots & -\cos(s) - (n-1)
%\end{array}
%\right)
%\]

This triangular matrix has two unique eigenvalues, $-(n+1)\cos(s)$ with multiplicity $1$, and $-n-\cos(s)$ with multiplicity $n-1$. Both are negative for $s \in[0,\frac{\pi}{2})$, and only the first one becomes $0$ for $s=\frac{\pi}{2}$.
Hence, the point $p_1^+=\frac{\pi}{2} e_1$ belongs to the boundary of $\Omega$. For the remaining points $p_i^{\pm}$, the computations are equivalent.

%It is possible to compute explicitly the eigenvalues of the above matrix. There are three eigenvalues: $ -\cos(s)-n$ with multiplicity $n-2$, and two simple eigenvalues $\lambda_{\pm}$ given by
%\[
%\lambda_{\pm}= \frac{-((n+1)\cos(s)+1) \pm \sqrt{ ((n+1)\cos(s)+1)^2-4(\cos^2(s)+n\cos(s))} }{2}
%\]
%which are all negative for $s \in(0,\frac{\pi}{2})$. When $s=\frac{\pi}{2}$ the eigenvalues of $D\bm{F}(\frac{\pi}{2}e_1)$ are $0$, $-1$, and $-n$ with multiplicity $n-2$. So, the point $p_1^+$ belongs to the boundary of $\Omega$.
%For the remaining points $p_i^{\pm}$, the computations are equivalent.

\end{proof}

We denote by $\mathcal P_n$ the polytope in $\mathbb R^n$ with extremal $2n$ points $\bm{F}(p_i^{\pm})$. More precisely, $\mathcal P_n$ is the smallest convex set in $\mathbb R^n$ containing the $n$ segments $S_n$ given by
%\[
% S_n = \bigcup_{k=1}^n  \{t p_k^+  \, %| \,  t \in [-1,1] \}.
%\]

\[
 S_n = \bigcup_{k=1}^n  \{t \bm{F}(p_k^+)  \, | \,  t \in [-1,1] \}.
\]

%For example, if $n=2$, the polytope $\mathcal P_2$ contains the two segments $S_1 = \{ t (-2,1) \, | \,  t \in [-1,1]\}$  and $S_2=\{ t (1,-2) \, | \,  t \in [-1,1]\}$. We show in Figure~\ref{fig:marked_points} these segments and the Polytope $\mathcal P_2$.

\begin{figure}[tbp]
    \centering
%     \includegraphics[width=0.9\textwidth]{fig_3.pdf}
       % \put(-270,108){\small $y$}
       %           \put(-200,40){\small $x$}
       %           \put(-12,39){\small $y$}
       %           \put(-92,100){\small $z$}
       %           \put(-138,10){\small $x$}
       %           \put(-225,58){\small $\mathcal P_2$}
       %           \put(-65,83){\small $\mathcal P_3$}
       \includegraphics[width=0.49\textwidth]{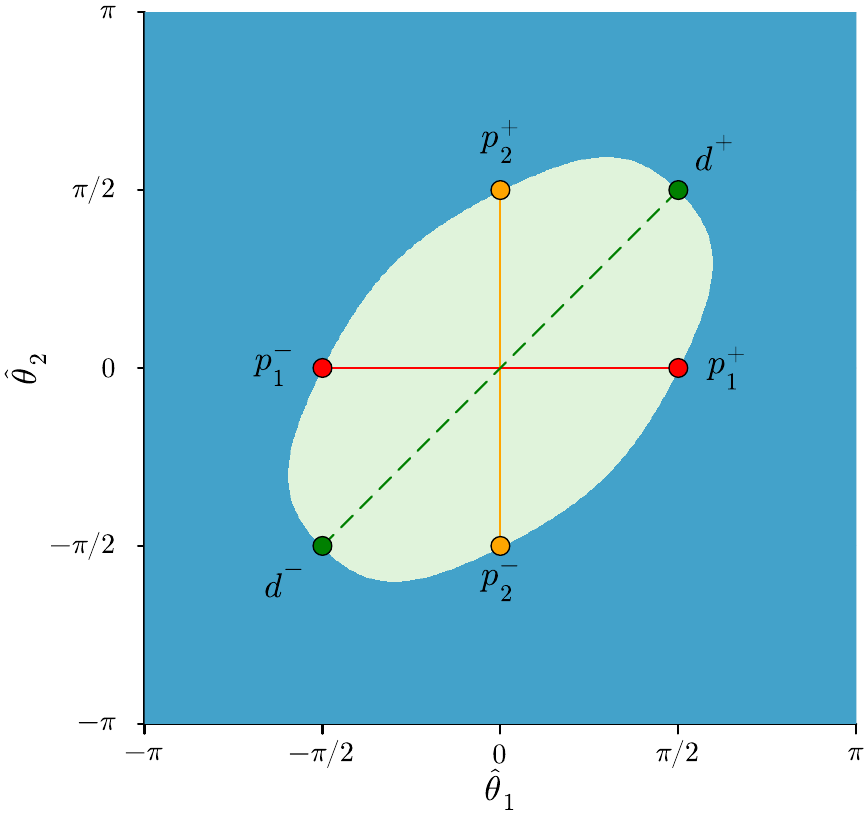}
       \includegraphics[width=0.47\textwidth]{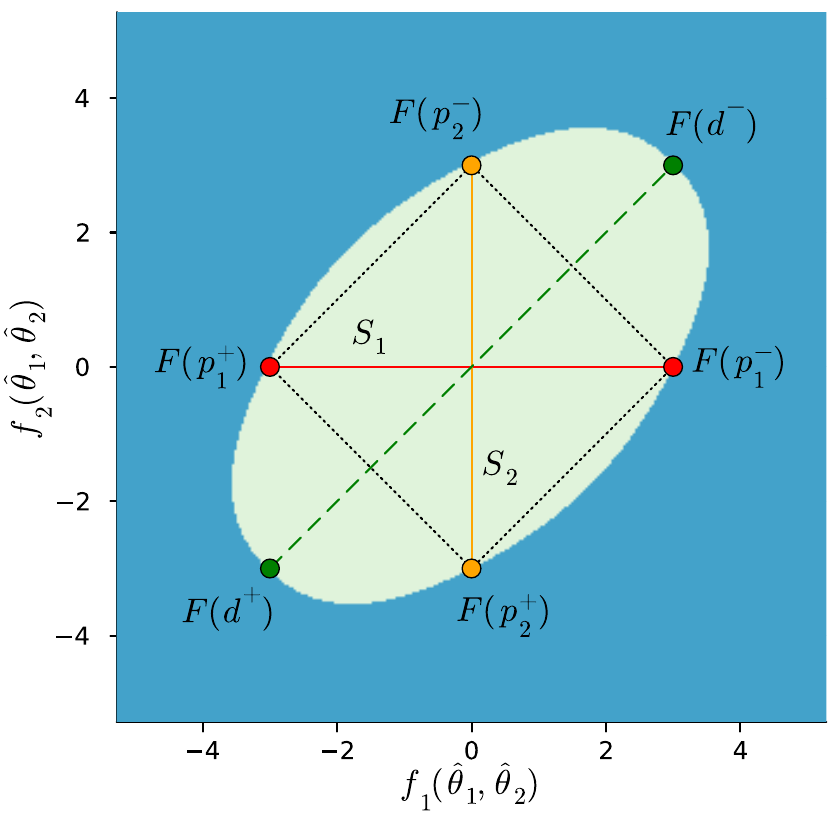}

\caption{\small
Stability region $\Omega\subset\mathbb T^2$ (left) and its convex image $\bm{F}(\Omega)\subset\mathbb R^2$ (right) for $n=2$, both in light green. The polytope $\mathcal P_2$ is marked with a dotted black line, together with the sets $S_1$ and $S_2$, their reference points $\bm{F}(p_1^\pm)$ and $\bm{F}(p_2^\pm)$, and the diagonal points $\bm{F}(d^\pm)$ (right).}
    \label{fig:marked_points}
    \end{figure}

\begin{lemma} \label{lem:polytope_n}
We denote by $\mathcal P_n$ the polytope with the extremal  points  $\bm{F}(p_i^{\pm})$ for $ 1 \leq i \leq n$. The polytope $\mathcal P_n$ has $2^n$ faces. Moreover,  the $2^n$ faces of the polytope  $\mathcal P_n$ are given by
\[
\varepsilon_1 x_1 + \varepsilon_2 x_2 + \ldots + \varepsilon_n x_n=n+1.
\]
\noindent where $\varepsilon_k \in \{ -1,1\}$ for all $1 \leq k \leq n$.

%containing the extremal points $\bm{F}(p_{i_1}^+), \ldots, \bm{F}(p_{i_k}^+), \bm{F}(p_{i_{k+1}}^-), \ldots, \bm{F}(p_{i_n}^-) $, where
%$i_1, \ldots, i_k$ are $k$ arbitrary different indexes in  the set  $ \{ 1, \ldots, n\}$  for any  $0 \leq k \leq n$   and $i_{k+1}, \ldots, i_{n}$ are the complementary indexes is given by the equation

%\[
 %(n-2k-1) ( x_{i_1} + x_{i_2} + \ldots + x_{i_k}) + (n-2k+1) ( x_{i_{k+1}}+ \ldots + x_{i_n})=n+1.
%\]
\end{lemma}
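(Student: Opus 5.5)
The plan is to compute the $2n$ extremal points $\bm{F}(p_i^{\pm})$ explicitly, recognize $\mathcal P_n$ as a scaled cross-polytope (an $\ell^1$-ball), and then read off its facets.

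\textbf{Step 1: the vertices.} By Remark~\ref{rem:symmetries}, $\bm{F}(p_i^-)=-\bm{F}(p_i^+)$, so it suffices to evaluate $\bm{F}$ at $p_i^+=\frac{\pi}{2}e_i$ using \eqref{eq:f_hat}.
\begin{itemize}
\item For the $i$-th component, $\hat f_i(p_i^+) = -\sin\frac{\pi}{2} - \sin\frac{\pi}{2} + (n-1)\sin(-\frac{\pi}{2}) = -(n+1)$.
\item For $k\neq i$, $\hat f_k(p_i^+) = -\sin 0 - \sin\frac{\pi}{2} + \sin\frac{\pi}{2} = 0$, since the only nonzero term in the last sum is $j=i$.
\end{itemize}
Hence $\bm{F}(p_i^{\pm})=\mp(n+1)e_i$. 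The segments $S_n$ are therefore exactly the coordinate segments $[-(n+1),n+1]\,e_k$.

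\textbf{Step 2: $\mathcal P_n$ is an $\ell^1$-ball.} I will show that $\mathcal P_n=\mathrm{conv}\{\pm(n+1)e_k\}=B:=\{x\in\mathbb R^n : \sum_k |x_k|\le n+1\}$.
\begin{itemize}
\item Inclusion $\subseteq$: $B$ is convex and contains every vertex.
\item Inclusion $\supseteq$: any $x\in B$ can be written as
\[
x=\sum_k \frac{|x_k|}{n+1}\,\big(\mathrm{sign}(x_k)(n+1)e_k\big) + \Big(1-\sum_k\frac{|x_k|}{n+1}\Big)\cdot 0,
\]
and $0=\frac12\big((n+1)e_1\big)+\frac12\big(-(n+1)e_1\big)$. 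This exhibits $x$ as a convex combination of vertices.
\end{itemize}
Since $\sum_k|x_k|=\max_{\varepsilon\in\{\pm1\}^n}\sum_k\varepsilon_k x_k$, it follows that
\[
\mathcal P_n=\bigcap_{\varepsilon\in\{\pm1\}^n}\Big\{x:\ \sum_k \varepsilon_k x_k\le n+1\Big\},
\]
an intersection of $2^n$ half-spaces.

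\textbf{Step 3: every inequality defines a genuine facet.} It remains to check that none of these $2^n$ half-spaces is redundant. Fix $\varepsilon$. The hyperplane $\sum_k\varepsilon_k x_k=n+1$ contains the $n$ vertices $\varepsilon_k(n+1)e_k$, and these are linearly, hence affinely, independent. All other vertices satisfy the strict inequality, since for them the left side equals $-(n+1)$. So the face cut out by this hyperplane is an $(n-1)$-simplex, i.e.\ a facet. Distinct sign vectors $\varepsilon$ give distinct vertex sets, hence distinct facets, so there are exactly $2^n$ of them.

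\textbf{Main obstacle.} The only step needing care is Step 1: getting the signs of the sine terms right, in particular $\sin(\hat\theta_j-\hat\theta_k)$ with $j=i$ versus $k=i$. Once the vertices are identified as $\pm(n+1)e_i$, Steps 2 and 3 are the standard description of the cross-polytope.
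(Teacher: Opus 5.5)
Your proof is correct, and Step~1 is the same computation as the paper's, $\bm{F}(p_i^{\pm})=\mp(n+1)e_i$. Where you differ is in how the facets are identified.

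The paper picks one vertex from each antipodal pair $\{\bm{F}(p_i^+),\bm{F}(p_i^-)\}$ and writes down the hyperplane through those $n$ points. It then asserts that these $2^n$ hyperplanes are the faces. It never checks that each one is a supporting hyperplane, with the remaining vertices strictly on one side, nor that no other facets occur.

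You instead prove $\mathcal P_n=\{x:\sum_k|x_k|\le n+1\}$ and get the half-space description from $\sum_k|x_k|=\max_{\varepsilon\in\{\pm1\}^n}\sum_k\varepsilon_k x_k$. You then verify that each inequality is facet-defining. The word \emph{exactly} also needs the standard fact that every facet of a full-dimensional polytope is cut out by one of the inequalities in any half-space description of it. You use this implicitly; with it, the count is settled rigorously.

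The paper's route is shorter. Yours supplies the missing justification. It also makes Theorem~B immediate: for $t>0$, $t\hat{\bfomega}\in\partial\mathcal P_n$ if and only if $t\sum_j|\hat{\omega}_j|=n+1$. This bypasses the paper's sign-based argument for which face the ray hits first.
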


\begin{proof}

We can compute the images $\bm{F}(p_i^{\pm})$.  We easily have $\bm{F} (p_i^+) = -(n+1) e_i$ and $\bm{F} (p_i^-)  = (n+1)e_i $ for all $i=1, \ldots, n$. In order to find all the faces of the polytope $\mathcal P_n$  we only need to select $k$ points in the set $\{ \bm{F}(p_1^+), \ldots, \bm{F}(p_n^+)\}$ and the $n-k$ complement points in the set $\{  \bm{F}(p_1^-), \ldots, \bm{F}(p_n^-)\}$  where
$k$ runs from 0 to n.  As an example, the plane passing through the points $$\bm{F}(p_1^+), \ldots, \bm{F}(p_k^+), \bm{F}(p_{k+1}^-), \ldots, \bm{F}(p_n^-),$$
is the hyperplane  $$-x_1- \ldots - x_k + x_{k+1} + \ldots+ x_n =n+1.$$

In the general case, any of the $2^n$ faces of the Polytope contains a set of extremal points $$\bm{F}(p_{i_1}^+), \ldots, \bm{F}(p_{i_k}^+), \bm{F}(p_{i_{k+1}}^-), \ldots, \bm{F}(p_{i_n}^-), $$ where
$i_1, \ldots, i_k$ are arbitrary indexes in the set $ \{ 1, \ldots, n\}$  for any  $0 \leq k \leq n$, and $i_{k+1}, \ldots, i_{n}$ are the complementary indexes; in other words $i_{k+1}, \ldots, i_n \in \{1, \ldots, n \} \setminus \{ i_1, \ldots, i_k\}$. Thus, the hyperplane containing $\bm{F}(p_{i_1}^+), \ldots, \bm{F}(p_{i_k}^+), \bm{F}(p_{i_{k+1}}^-), \ldots, \bm{F}(p_{i_n}^-) $ is

\[
 -( x_{i_1} + x_{i_2} + \cdots + x_{i_k}) +  ( x_{i_{k+1}}+ \cdots + x_{i_n})=n+1
\]

\end{proof}

\subsection{Proof of Theorem B}\label{subsec:Proof_B}

%\begin{proof}[Proof of Theorem B]

We first claim  that $\bm{F}(\Omega)$ is a convex set. This result comes from \cite{Bronski_DeVille_Park} where the authors consider the full Kuramoto model \eqref{eq:main}. More precisely, Bronski, DeVille and Park proved that $\bm{f}(S_{\bftheta})$ is a convex set, where $S_{{\bftheta}}$ is the set of configurations $\bftheta$ for which the Jacobian $D\bff(\bftheta)$ is negative semi-definite with a one dimensional kernel.

We  denote by $\pi:\mathbb T^{n+1} \to \mathbb T^n$ quotient map
\[
\pi(\theta_1, \theta_2, \ldots, \theta_n, \theta_{n+1})=(\theta_1 - \theta_{n+1}, \ldots, \theta_{n}-\theta_{n+1})
\]
and the gauge section $\iota: \mathbb T^n \to \mathbb T^{n+1}$ defined by
\[
\iota(\hat{\theta}_1, \ldots, \hat{\theta
}_n)=(\hat{\theta}_1, \ldots, \hat{\theta}_n,0).
\]
We also define the linear map $L:\mathbb R^{n+1} \to \mathbb R^n$ defined by $$L(x_1, \ldots, x_n, x_{n+1})=(x_1-x_{n+1}, \ldots, x_n-x_{n+1})$$
Then, the reduced vector field is precisely
\[
\bm{F}(\hat{\bftheta})=L\bm{f}(\iota(\hat{\bf{\bftheta}})).
\]

The full Kuramoto vector field is invariant under uniform phase shifts, and
therefore $D\bm{f}(\boldsymbol\theta)\bm{1}=\bm{0}$. The condition that
$D\bm{f}(\boldsymbol\theta)$ is negative semidefinite with kernel
$\operatorname{span}\{\bm{1}\}$, as in
\cite{Bronski_DeVille_Park}, is equivalent to strict negative definiteness
of the induced Jacobian on the quotient by global phase shifts. In the above
gauge, this induced Jacobian is represented by $D\bm{F}$. Consequently,
\[
\Omega
=
\pi(S_\theta),
\]
or equivalently, $S_\theta$ is the $S^1$-orbit of $\iota(\Omega)$. Since $\bm{f}(S_\theta)$ is convex by
\cite{Bronski_DeVille_Park}, and since $L$ is linear, it follows that
\[
\bm{F}(\Omega)
=
L\bigl(\bm{f}(S_\theta)\bigr)
\]
is a convex set and the claim follows.

%\alex{We now clarify the relation between the reduced stability set used here and
%the stability set considered in \cite{Bronski_DeVille_Park}. Let
%$N=n+1$, and define the quotient map
%\[
%\pi(\theta_1,\ldots,\theta_N)
%=
%(\theta_1-\theta_N,\ldots,\theta_{N-1}-\theta_N),
%\]
%together with the gauge section
%\[
%\iota(\hat{\boldsymbol\theta})
%=
%(\hat\theta_1,\ldots,\hat\theta_n,0).
%\]
%We also define the linear map
%\[
%L:\mathbb R^{N}\to\mathbb R^n,\qquad
%L(v)_i=v_i-v_N .
%\]
%Then the reduced vector field is precisely
%\[
%\bm{F}(\hat{\boldsymbol\theta})
%=
%L\,\bm{f}(\iota(\hat{\boldsymbol\theta})).
%\]

%The full Kuramoto vector field is invariant under uniform phase shifts, and
%therefore $D\bm{f}(\boldsymbol\theta)\bm{1}=0$. The condition that
%$D\bm{f}(\boldsymbol\theta)$ is negative semidefinite with kernel
%$\operatorname{span}\{\bm{1}\}$, as in
%\cite{Bronski_DeVille_Park}, is equivalent to strict negative definiteness
%of the induced Jacobian on the quotient by global phase shifts. In the above
%gauge this induced Jacobian is represented by $D\bm{F}$. Consequently,
%\[
%\Omega
%=
%\pi(S_\theta),
%\]
%or equivalently $S_\theta$ is the $S^1$-orbit of $\iota(\Omega)$.

%Since $\bm{f}(S_\theta)$ is convex by
%\cite{Bronski_DeVille_Park}, and since $L$ is linear, it follows that
%\[
%\bm{F}(\Omega)
%=
%L\bigl(\bm{f}(S_\theta)\bigr)
%\]
%is convex.}

 Let $\hat{\bfomega} \neq \bf0$ be a vector of frequencies. By construction, the polytope $\mathcal P_n$ is a subset of $\bm{F}(\Omega)$. We consider the ray $t\hat{\bfomega}$ for $t >0$. We first notice that for $t$ sufficiently small we have that $t \hat{\bfomega}$ belongs to $ \mathcal P_n \subset \bm{F}(\Omega)$. This ray first intersects the boundary of the polytope $\mathcal P_n$,  and later on this ray it intersects the boundary of $\bm{F}(\Omega)$, since $\mathcal P_n \subset \bm{F}(\Omega)$. We denote by $K_b$ and $K_{br}$ the inverse of the values of $t$ such that $\frac{\hat{\bfomega}}{K_b} \in \partial \mathcal P_n$ and $\frac{\hat{\bfomega}}{K_{br}} \in \partial \bm{F}(\Omega)$, respectively, thus obtaining $K_b \geq K_{br}$.

%We first compute the coordinates of $\hat{\bfomega}$ in the basis $\mathcal B$ (see Lemma~\ref{lem:faces}).
The sign of the coordinates of $\hat{\bfomega}$ determines the face where the ray $t \hat{\bfomega}$ first  intersects the boundary of the polytope $\mathcal P_n$. In order to construct this face we select $n$ points in this hyperplane. Concretely, we select $n$ points in the set $\{ \bm{F}(p_1^{\pm}), \ldots, \bm{F}(p_n^{\pm})$\}. If $\hat{\omega}_j >0$ we select the point $\bm{F} (p_j^-)$ since it is the position vector in the canonical basis, and when $\hat{\omega}_j \leq0$ we select $\bm{F}(p_j^+)$. The rationale for this is that, the hyperplane that is first hit is such that vector $t\hat{\bfomega}$ is a linear combination of the position vectors with non-negative coefficients, thus we select the position vectors so as this condition is satisfied. With this procedure, we have selected $k$ indexes $i_1, \ldots, i_k$ corresponding to $\bm{F}(p_{i_1}^+), \ldots, \bm{F}(p_{i_k}^+)$ and $n-k$ indexes $i_{k+1}, \ldots, i_n$ corresponding to $\bm{F}(p_{i_{k+1}}^-), \ldots, \bm{F}(p_{i_n}^-)$. From Lemma~\ref{lem:polytope_n}, the selected face is given by
\[
 - ( x_{i_1} + x_{i_2} + \ldots + x_{i_k}) +  ( x_{i_{k+1}}+ \ldots + x_{i_n})=n+1.
\]
Finally, the intersection between the ray $t\hat{\bfomega}$ and the above hyperplane is given at the point $t_b$ given by
\[
t_b = \frac{n+1}{ - ( \hat{\omega}_{i_1} + \hat{\omega}_{i_2} + \ldots + \hat{\omega}_{i_k}) +  ( \hat{\omega}_{i_{k+1}}+ \ldots + \hat{\omega}_{i_n})} = \frac{n+1}{\displaystyle \sum_{j=1}^n |\hat{\omega}_j|}.
\]
Hence
\[
K_b = \frac{1}{t_b}=  \frac{1}{n+1} \, \displaystyle \sum_{j=1}^n |\hat{\omega}_j|.
\]
is an upper bound of the critical parameter $K_{\ell}$, since $K_b \geq K_{br} \geq K_{\ell}$.

%\end{proof}

\subsection{Improving the upper bound $K_b$}

We briefly explain how to improve the value of the upper bound $K_b$ of $K_{\ell}$ obtained in Theorem~B. The geometric idea behind this upper bound is to approximate $\bm{F}(\Omega)$ by a polytope $\mathcal P_n$ constructed from $2n$ points on the boundary of $\bm{F}(\Omega)$. We can improve the bound by adding to the polytope more vertices laying at the boundary of $\bm{F}(\Omega)$.

Let us consider the two diagonal points $d^{\pm}= \pm\frac{\pi}{2}(1, \ldots, 1)$. It is easy to see that both diagonal points belong to $\partial \Omega$. Their images are given by
$\bm{F}(d^{+})=-(n+1)(1, \ldots, 1)$ and $\bm{F}(d^{-})=(n+1)(1,\ldots,1)$. Moreover, we notice that
\[
\begin{array}{l}
\bm{F}(d^{-}) = \bm{F}(p_1^-) + \ldots + \bm{F}(p_n^-)  \\
\bm{F}(d^{+}) = \bm{F}(p_1^+) + \ldots + \bm{F}(p_n^+).
\end{array}
\]
%since
%\[
%\displaystyle \sum_{i=1}^n u_i = (n-(n-1), n-(n-1), \ldots, n-(n-1))=(1, \ldots, 1)=\bm{F} (d^{-}).
%\]

We can refine the polytope $\mathcal P_n$ adding these two diagonal points $\bm{F}(d^{\pm})$.
Let us assume that $\hat{\bfomega}=(\hat{\omega}_1, \ldots, \hat{\omega}_n)$  with  all $\hat{\omega}_j>0$, or in other words, we have selected the face of the polytope $\mathcal P_n$ containing the points $\bm{F}(p_1^-),\bm{F}(p_2^-), \ldots, \bm{F}(p_n^-)$. In this case, we can use the point $\bm{F}(d^{-})$ to construct $n$ extra hyperplanes to obtain a better approximation of the intersection point between the ray $t \hat{\bfomega}$ and the boundary of the set $\bm{F}(\Omega)$. We briefly sketch how to do this refinement. These $n$ extra hyperplanes are defined as the hyperplanes containing $n-1$ points in the set $\{\bm{F}(p_1^-),\bm{F}(p_2^-), \ldots, \bm{F}(p_n^-)\}$ and the point $\bm{F}(d^-)$. We can compute explicitly these $n$-hyperplanes: the equation of the plane containing the $n$ points $\{\bm{F}(p_1^-), \ldots, \bm{F}(p_{s-1}^-), d^-,\bm{F}(p_{s+1}^-), \ldots,\bm{F}(p_{n}^-) \}$ is given by
\[
(2-n) x_s  +  \sum_{i=1, i \neq s}^n x_i = n+1.
\]
%It has been obtained by imposing that a general hyperplane $a_1 x_1 + a_2 x_2 + \cdots + a_n x_n = 1$ passes through the desired $n$ points.

In the symmetric case, the hyperplane passing through the $n$ points \newline $\{\bm{F}(p_1^+), \ldots, \bm{F}(p_{s-1}^+), d^+,\bm{F}(p_{s+1}^+), \ldots,\bm{F}(p_{n}^+) \}$  is given by
\[
(2-n)x_s  + \sum_{i=1, i \neq s}^n x_i = -(n+1).
\]

%For this $\hat{\bfomega}$, Lemma~\ref{lem:faces} tells us that $\lambda_j >0$ for all $1\leq j \leq n$.
We denote $\hat{\omega}_s = \displaystyle \min_{1 \leq j \leq n} \{\hat{\omega}_j\}$, or in other words, let $s$ be the index where this minimum is attained. Thus, we have
\[
\begin{array}{ll}
\hat{\bfomega} = & \hat{\omega_{1}} e_1 + \ldots + \hat{\omega_{n}} e_n = \displaystyle\sum_{i=1, i \neq s}^n (\hat{\omega}_i -\hat{\omega}_s) e_i + \hat{\omega}_s (e_1 + \cdots + e_n) \\
 & =\displaystyle\sum_{i=1, i \neq s}^n (\hat{\omega}_i -\hat{\omega}_s) e_i + \hat{\omega}_s {\bf0} \bm{F}( d^-)
\end{array}
\]
The value of $s$ corresponds to the index of the hyperplane where the ray $t \hat{\bfomega}$ hits first these $n$ extra hyperplanes. Hence, we have obtained the value $t^1_b$ corresponding to the intersection between the ray $t \hat{\bfomega}$ and the hyperplane $(2-n) x_s  +  \displaystyle\sum_{i=1, i \neq s}^n x_i  = n+1$. Finally,
\begin{equation}\label{eq:K_b1}
K^1_b = \frac{1}{t^1_b} = \frac{1}{n+1} \left[(2-n) \hat{\omega}_s  +  \displaystyle \sum_{i=1, i \neq s}^n \hat{\omega}_i\right] = \frac{1}{n+1} \left[(1-n) \hat{\omega}_s  +  \displaystyle \sum_{i=1}^n \hat{\omega}_i\right]
\end{equation}
%=\frac{1}{n+1} \left[(1-n) \hat{\omega}_s  + 2 \displaystyle \sum_{i=1}^n \hat{\omega}_i\right].

By construction, $K_{\ell} \leq K_b^1\leq K_b$, i.e., $K_b^1$ is a better upper bound of $K_{\ell}$. We can check that $K^1_b \leq K_b$. Using the expression of $K_b=\frac{1}{n-1} \displaystyle \sum_{j=1}^n \hat{\omega}_j$ since all $\hat{\omega}_j$ are positive. Therefore,
\[
K_b^1 \leq K_b
\Leftrightarrow
(1-n) \hat{\omega}_s +  \displaystyle\sum_{j=1}^n \hat{\omega}_j \leq
 \displaystyle\sum_{j=1}^n \hat{\omega}_j
\Leftrightarrow
0 \leq (n-1) \hat{\omega}_s,
\]
which is positive since, by assumption, $\hat{\omega}_j >0$ for $1 \leq j \leq n$.

Similarly, if $\hat{\bfomega}$ is such that we have selected the face of the polytope $\mathcal P_n$ containing the points $\bm{F}(p_1^+), \ldots, \bm{F}(p_n^+)$, i.e., $\hat{\omega}_j < 0$ for all $1\leq j \leq n$. In this case we add the diagonal point $\bm{F}(d^+)$ and denoting $\hat{\omega}_r = \displaystyle \max_{1 \leq j \leq n} \{\hat{\omega}_j\}$, the bound is now
\begin{equation}\label{eq:K_b2}
K^2_b = \frac{-1}{n+1} \left[(2-n) \hat{\omega}_r +  \displaystyle \sum_{j=1, j \neq r}^n \hat{\omega}_j\right] =
\frac{1}{n+1} \left[(1-n) |\hat{\omega}_r|  +  \displaystyle \sum_{j=1}^n |\hat{\omega}_j|\right].
\end{equation}

This refinement of the upper bound of $K_{\ell}$, making use of the diagonal points $d^{\pm}$, only applies to the cases in which $\hat{\bfomega}$ is such that all $\hat{\omega}_j$ are positive or all are negative, which limits its reach. However, this methodology could be extended in a similar way by adding other points on the boundary of $\bm{F}(\Omega)$, e.g., those of the form $\bm{F} (\pm\frac{\pi}{2}(a_1,\ldots,a_n))$ with $a_i\in\{0,1\}$ for all $1\leq i \leq n$, thus enabling a much richer applicability.

The specific improvement of the bound here detailed is especially important for frequency vectors $\hat{\bfomega}$ close to the diagonal, i.e., with all values around a center frequency and with small deviations. For example, if we take frequencies normally distributed with mean~$2$ and standard deviation~$0.05$, $\hat{\omega}_i\sim N(2, 0.05^2)$ for all $1\leq i\leq n$, in one of the realizations we obtain $K_{\ell}=0.02$, $K_b=1.98$, and $K_b^1=0.17$, which shows a drastic enhancement of the bound.

\subsection{Analysis for frequencies at the vertices of the polytope}

We have shown that the vertices $\bm{F}(p_i^{\pm})$ of the polytope $\mathcal P_n$ lay at the boundary of the stable region $\bm{F} (\Omega)$. This means that, in these vertices, the bound on the fully phase-locked state becomes exact, i.e., $K_b = K_{\ell}$; in fact, we have $K_b = K_{\ell} = 1$ since $t_b=1$.

Let us consider one of the vertices, e.g., $\bm{F}(p_1^{+})$, which corresponds to a vector of frequencies $\hat{\bfomega}=-\bm{F}(p_1^{+})=(n+1,0,\ldots,0)$ for $K=1$, and a stationary solution $\bftheta^*=\frac{\pi}{2}e_1$. For $K\ge K_{\ell}$, all phases are constant at the stationary state, $\dot{\hat{\theta}}_1=0$, but we do not know which $\bftheta^*$ is solution of $\hat{\bfomega}+K\bm{F}(\bftheta^*)=0$. An ansatz could be $\bftheta^*=(\hat{\theta}_1,0,\ldots,0)$, which is also proportional to $e_1$. Substituting this ansatz into Eqs.~\eqref{eq:main1} and~\eqref{eq:f_hat} in the stationary state, we get:
\[
\left\{
\begin{array}{l}
    \displaystyle 0 = (n + 1) + K (-\sin \hat{\theta}_1 - \sin \hat{\theta}_1 + (n-1)\sin(0-\hat{\theta}_1)) = (n + 1) (1 - K\sin \hat{\theta}_1)
    \\
    \displaystyle 0 = 0 + K (-0 - \sin \hat{\theta}_1 + \sin(\hat{\theta}_1-0)) = 0
\end{array}
\right.
\]
The solution is as follows:
\[
  \sin\hat{\theta}_1=\frac{1}{K} \ \ \Longrightarrow \ \ \bftheta^* = (\arcsin K^{-1},0,\ldots,0)
\]

Since there are only two different frequencies, we can write the order parameter~\eqref{eq:R_red} as:
\[
  \hat{R}=\frac{1}{n+1}\left| 1+e^{i\hat{\theta}_1}+(n-1) \right|
  =
  \frac{1}{n+1}\sqrt{1+n^2+2n\cos \hat{\theta}_1}
\]
Thus, the analytic value of the order parameter for $K\ge K_{\ell}=1$ is:
\begin{equation}\label{eq:R_polytope_vertex}
    \hat{R}_{\text{theo}}(K)=\frac{n}{n+1}\sqrt{1+\frac{2\sqrt{K^2-1}}{Kn}+\frac{1}{n^2}}.
\end{equation}

\begin{figure}[tb]
    \centering
     \includegraphics[width=0.9\textwidth]{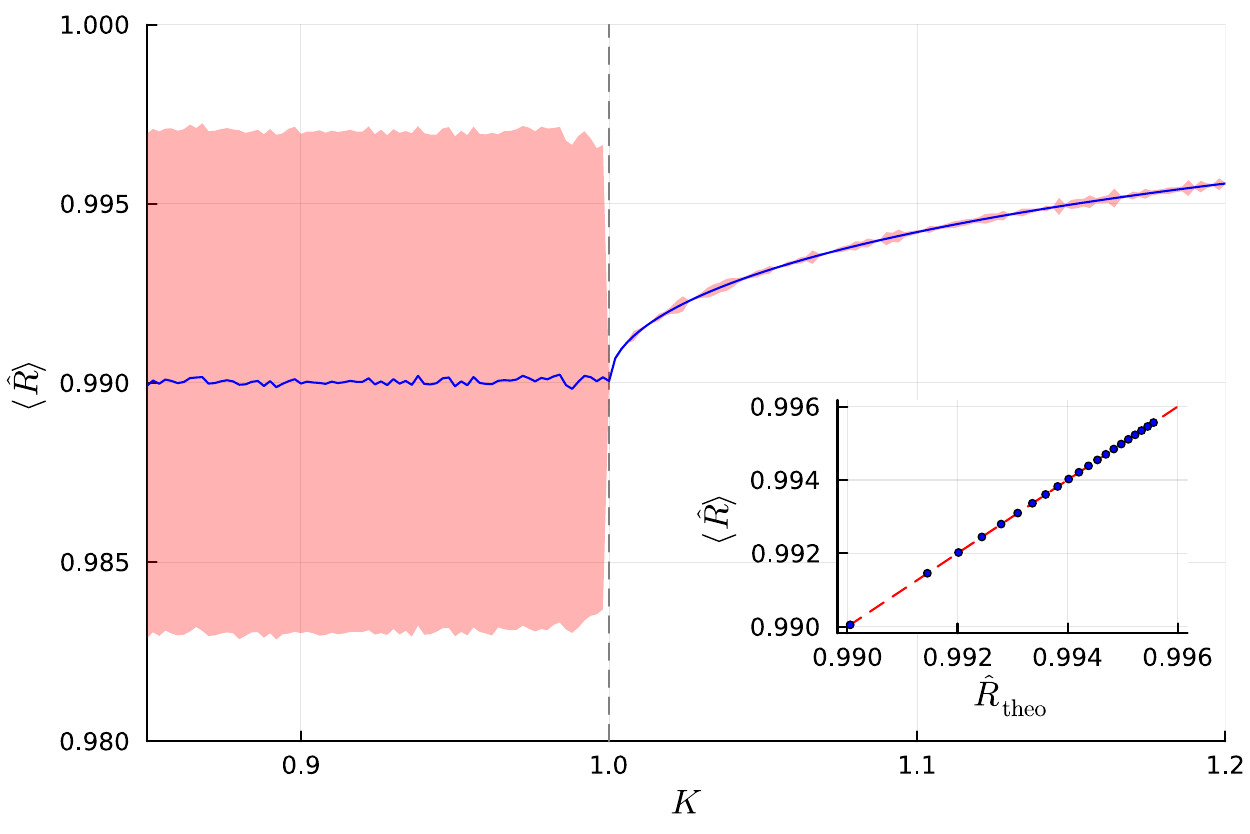}
           \caption{\small{
Zoom of the phase diagram of the Kuramoto model for $n+1=100$} and with a vector of frequencies laying at one of the vertices of the polytope, namely, $\hat{\bfomega}=\bm{F}(p_1^{-})=(n+1,0,\ldots,0)$. Similarly to Figure~\ref{fig:order_parameter}, we represent $\langle \hat{R} \rangle$ as a blue solid line, surrounded by a pale-red ribbon of width $\sigma_{\hat{R}}$. The vertical dashed line at $K=K_{\ell}=K_b=1$ marks the transition to the fully phased-locked transition. The inset shows the agreement between the values of $\langle \hat{R} \rangle$ and those calculated analytically, $\hat{R}_{\text{theo}}$, using Eq.~\eqref{eq:R_polytope_vertex}, for~21 values of $K$ equally spaced between $K=1.0$ and $K=1.2$.}
      \label{fig:R_polytope_vertex}
\end{figure}

In Figure~\ref{fig:R_polytope_vertex} we show the behavior of $\hat{R}(K)$ close to the critical threshold $K_{\ell}=1$, and that Eq.~\eqref{eq:R_polytope_vertex} matches perfectly with the experimental values for $K\ge K_{\ell}$.

\section{Discussion}

Determining the coupling strength required for synchronization is a central problem in the theory of coupled oscillators. In finite populations, and in the mean-field all-to-all setting, the onset of collective behavior can be characterized sharply as the loss of stability of the incoherent state. By contrast, a more stringent and structurally richer question concerns the existence of stable fully phase-locked equilibria in an appropriate co-rotating frame. This reduces to a nonlinear fixed-point problem for the phase differences, whose transcendental character makes closed-form solutions exceptional rather than typical.
A useful viewpoint, developed in related contexts, is that stability defines a region of phase space and that the vector field maps this region into a structured subset of frequency space \cite{Bronski_DeVille_Park}.
In this paper, we exploit this convex-geometric structure to obtain a fully explicit finite-size bound on the critical coupling: we construct an analytically tractable polytope $\mathcal P_n\subset \bm{F}(\Omega)$ from distinguished boundary points of the stability region and compute its faces in closed form.
Intersecting the ray $t\hat{\bfomega}$ with $\partial\mathcal P_n$ yields an explicit analytical upper bound $K_b \geq K_{\ell}$ that is directly computable from the frequency vector and captures the finite-size nature of the transition.
This approach is complementary to classical mean-field estimates and spectral criteria, in that it provides a deterministic, non-asymptotic bound tailored to a given realization of frequencies.
Beyond the explicit bound, the geometric formulation clarifies the intrinsically finite-size nature of the fully phase-locked stability transition and ties its onset to the loss of strict negative definiteness (degeneracy) of the stability Jacobian along the locked branch. This convex viewpoint provides a natural route to systematically refine outer approximations of $\bm{F}(\Omega)$, to analyze alternative coupling architectures, and to extend geometric synchronization criteria beyond the all-to-all setting.

\section*{Acknowledgements}
The authors acknowledge support from Generalitat de Catalunya 2021SGR-633, and Universitat Rovira i Virgili 2023PFR-URV-00633.
A.A.\ and S.G.\ acknowledge support from MICIN PID2021-128005NB-C21 and RED2022-134890-T, and Universitat Rovira i Virgili 2025INTER-03 (ComSCIAM).
A.A. also acknowledges the ICREA Academia program from Generalitat de Catalunya.

\bibliographystyle{alpha}
\bibliography{references}

\end{document}